\newif\iflong 
\title{A Generic Logic for Proving Linearizability\\(Extended Version)}
\title{A Generic Logic for Proving Linearizability}
\titlerunning{A Generic Logic for Proving Linearizability}
\author{Artem Khyzha\inst{1} \and
Alexey Gotsman\inst{1} \and
Matthew Parkinson\inst{2}}
\institute{IMDEA Software Institute
\and Microsoft Research Cambridge}
\newcommand{\myfrac}[2]{\genfrac{}{}{0.5pt}{}{\displaystyle #1}{\displaystyle #2}}
\newcommand{\mypar}[1]{\vspace{5pt}\noindent\textbf{#1.}}
\newcommand{\nlfrac}[2]{\genfrac{}{}{0pt}{}{\displaystyle #1}{\displaystyle #2}}
\newcommand{\sothat}{{.\,}}
\newtheorem{thm}{Theorem}
\newtheorem{prop}[thm]{Proposition}
\newtheorem{lem}[thm]{Lemma}
\newtheorem{dfn}[thm]{Definition}
\DeclarePairedDelimiter{\da}{\langle}{\rangle}
\DeclarePairedDelimiter{\db}{\llbracket}{\rrbracket}
\DeclarePairedDelimiter{\dc}{\lbrace}{\rbrace}
\DeclarePairedDelimiter{\df}{\lfloor}{\rfloor}
\newcommand{\powerset}[1]{\mathcal{P}(#1)}
\newcommand{\HPCom}{{\sf APCom}}
\newcommand{\HState}{{\sf AState}}
\newcommand{\Int}{{\sf Int}}
\newcommand{\LCom}{{\sf Com}}
\newcommand{\Loc}{{\sf Loc}}
\newcommand{\LPCom}{{\sf PCom}}
\newcommand{\LState}{{\sf State}}
\newcommand{\LVars}{{\sf LVar}}
\newcommand{\RGsep}{{\sf RGsep}}
\newcommand{\ThreadID}{{\sf ThreadID}}
\newcommand{\Vals}{{\sf Val}}
\newcommand{\cskip}{{\sf skip}}
\newcommand{\emptymap}{[\ ]}
\newcommand{\emptytr}{\varepsilon}
\newcommand{\hpcom}{{\rm A}}
\newcommand{\hstate}{\Sigma}
\newcommand{\id}{{\sf id}}
\newcommand{\lcom}{C}
\newcommand{\lpcom}{{\sf \alpha}}
\newcommand{\lP}{\mathcal{P}}
\newcommand{\lQ}{\mathcal{Q}}
\newcommand{\lR}{\mathcal{R}}
\newcommand{\lstate}{\sigma}
\DeclareTextFontCommand{\textbfit}{
  \fontseries\bfdefault 
  \itshape
}
\def\@listI{\leftmargin\leftmargini\parsep 1\p@ \@plus1\p@
\@minus\p@\topsep 1\p@ \@plus2\p@ \@minus0\p@\itemsep0\p@}\let\@listi
\newcommand{\stable}{{\sf stable}}
\newcommand{\dom}{{\sf dom}}
\newcommand{\evalf}[2]{\db{#1}_{#2}}
\newcommand{\reif}[1]{\df{{#1}}}
\newcommand{\aslto}{\mathbin{\Mapsto}}
\newcommand{\cc}{\mathbin{{:}{:}}}
\newcommand{\ldet}{\mathbin{+}}
\newcommand{\lkstar}{\star}
\newcommand{\lseq}{\mathbin{\,;\,}}
\newcommand{\pto}{\rightharpoonup}
\newcommand{\slto}{\mathbin{\mapsto}}
\newcommand{\vop}{\mathbin{*}}
\newcommand{\vopall}{\circledast}
\begin{document}
\maketitle

\begin{abstract}
  Linearizability is a commonly accepted notion of correctness for libraries of
  concurrent algorithms, and recent years have seen a number of proposals of
  program logics for proving it. Although these logics differ in technical
  details, they embody similar reasoning principles. To explicate these
  principles, we propose a logic for proving linearizability that is generic: it
  can be instantiated with different means of compositional reasoning about
  concurrency, such as separation logic or rely-guarantee. To this end, we
  generalise the Views framework for reasoning about concurrency to handle
  relations between programs, required for proving linearizability. We present
  sample instantiations of our generic logic and show that it is powerful enough
  to handle concurrent algorithms with challenging features, such as helping.
\end{abstract}

\section{Introduction}

To manage the complexity of constructing concurrent software, programmers
package often-used functionality into {\em libraries} of concurrent algorithms.
These encapsulate data structures, such as queues and lists, and provide clients
with a set of methods that can be called concurrently to operate on these
(e.g.,~\textsf{java.util.concurrent}). To maximise performance, concurrent
libraries may use sophisticated non-blocking techniques, allowing multiple
threads to operate on the data structure with minimum synchronisation.  Despite
this, each library method is usually expected to behave as though it executes
atomically. This requirement is formalised by the standard notion of correctness
for concurrent libraries, {\em linearizability}~\cite{linearizability}, which
establishes a form of a simulation between the original {\em concrete} library
and another {\em abstract} library, where each method is implemented atomically.

A common approach to proving linearizability is to find a {\em linearization
  point} for every method of the concrete library at which it can be thought of
  taking effect.
\footnote{Some algorithms cannot be reasoned about using linearization points,
which we discuss in \S\ref{sec:related}.}
Given an execution of a concrete library, the matching execution of the abstract
library, required to show the simulation, is constructed by executing the atomic
abstract method at the linearization point of the concrete method. A difficulty
in this approach is that linearization points are often not determined by a
statically chosen point in the method code. For example, in concurrent
algorithms with {\em helping}~\cite{herlihy-book}, a method may execute an
operation originally requested by another method, called in a different thread;
then the linearization point of the latter method is determined by an action of
the former.

Recent years have seen a number of program logics for proving linearizability
(see~\cite{DongolD14} for a survey). To avoid reasoning about the high number
of possible interleavings between concurrently executing threads, these logics
often use {\em thread-modular} reasoning. They establish protocols that
threads should follow when operating on the shared data structure and reason
separately about every thread, assuming that the rest follow the protocols. The
logics for proving linearizability, such as~\cite{rgsep,rgsim-lin}, usually
borrow thread-modular reasoning rules from logics originally designed for
proving non-relational properties of concurrent programs, such as
rely-guarantee~\cite{rg}, separation logic~\cite{seplogic} or combinations
thereof~\cite{rgsep,lrg}. Although this leads the logics to differ in technical
details, they use similar methods for reasoning about linearizability, usually
based on linearization points. Despite this similarity, designing a logic for
proving linearizability that uses a particular thread-modular reasoning method
currently requires finding the proof rules and proving their soundness afresh.

To consolidate this design space of linearization-point-based reasoning, we
propose a logic for linearizability that is {\em generic}, i.e., can be
instantiated with different means of thread-modular reasoning about concurrency,
such as separation logic~\cite{seplogic} or rely-guarantee~\cite{rg}. To this
end, we build on the recently-proposed Views framework~\cite{views}, which
unifies thread-modular logics for concurrency, such as the above-mentioned
ones. Our contribution is to generalise the framework to reason about relations
between programs, required for proving linearizability. In more detail,
assertions in our logic are interpreted over a monoid of {\em relational views},
which describe relationships between the states of the concrete and the abstract
libraries and the protocol that threads should follow in operating on these. The
operation of the monoid, similar to the separating conjunction in separation
logic~\cite{seplogic}, combines the assertions in different threads while
ensuring that they agree on the protocols of access to the state. The choice of
a particular relational view monoid thus determines the thread-modular reasoning
method used by our logic.

To reason about linearization points, relational views additionally describe a
set of special {\em tokens} (as in~~\cite{rgsep,rgsim-lin,tada}), each denoting
a one-time permission to execute a given atomic command on the state of the
abstract library. The place where this permission is used in the proof of a
concrete library method determines its linearization point, with the abstract
command recorded by the token giving its specification. Crucially, reasoning
about the tokens is subject to the protocols established by the underlying
thread-modular reasoning method; in particular, their \emph{ownership} can be
transferred between different threads, which allows us to deal with helping.

We prove the soundness of our generic logic under certain conditions on its
instantiations (Definition~\ref{def:axiom}, \S\ref{sec:logic}). These conditions
represent our key technical contribution, as they capture the essential
requirements for soundly combining a given thread-modular method for reasoning
about concurrency with the linearization-point method for reasoning about
linearizability.

To illustrate the use of our logic, we present its example instantiations where
thread-modular reasoning is done using disjoint concurrent separation
logic~\cite{csl} and a combination of separation logic and
rely-guarantee~\cite{rgsep}. We then apply the latter instantiation to prove
the correctness of a sample concurrent algorithm with helping. We expect that
our results will make it possible to systematically design logics using the
plethora of other methods for thread-modular reasoning that have been shown to
be expressible in the Views framework~\cite{csl,CAP,seplogicperm}.

\newcommand{\NThreads}{N}
\newcommand{\assume}{{\tt assume}}
\newcommand{\tid}{t}

\newcommand{\intp}[3]{\db{#2}_{#1}(#3)}
\newcommand{\trans}[3]{#1 \mathrel{{\xlongrightarrow{#2}}{}} #3}
\newcommand{\sttrans}[6]{\da{#1,#2} \mathrel{{\xtwoheadrightarrow{#3, #4}}{}} \da{#5,#6}}

\newcommand{\iter}[1]{{#1}^\lkstar}

\newcommand{\llib}{\mathrm{\ell}}
\newcommand{\hlib}{\mathcal{L}}

\section{Methods Syntax and Sequential Semantics}\label{sec:proglang}

We consider concurrent programs that consist of two components, which we call
{\em libraries} and {\em clients}. Libraries provide clients with a set of
methods, and clients call them concurrently. We distinguish {\em concrete} and
{\em abstract} libraries, as the latter serve as specification for the
former due to its methods being executed atomically.


\mypar{Syntax}
 Concrete methods are
  implemented as {\em sequential commands} having the syntax:
\[
\lcom \in \LCom ::= \lpcom \mid \lcom \lseq \lcom \mid \lcom \ldet
\lcom \mid \iter{\lcom} \mid \cskip,\quad \mbox{where } \lpcom \in \LPCom
\]
The grammar includes {\em primitive commands} $\lpcom$ from a set $\LPCom$,
sequential composition $\lcom \lseq \lcom$, non-deterministic choice $\lcom
\ldet \lcom$ and a finite iteration $\iter{\lcom}$ (we are interested only in
terminating executions) and a termination marker $\cskip$. We use $\ldet$ and
$(\cdot)^\lkstar$ instead of conditionals and while loops for theoretical
simplicity: as we show at the end of this section, given appropriate primitive
commands the conditionals and loops can be encoded. We also assume a set
$\HPCom$ of {\em abstract primitive commands}, ranged over by $\hpcom$, with
which we represent methods of an abstract library.

\begin{figure}[t]
${\longrightarrow} \subseteq \LCom \times \LPCom \times \LCom$

{\centering
$\begin{array}{l@{\ }l@{\ }l}
\myfrac{
  \trans{\lcom_1}{\lpcom}{\lcom'_1}
}{
  \trans{\lcom_1 \lseq \lcom_2}{\lpcom}{\lcom'_1 \lseq \lcom_2}
} &
\myfrac{
  \,
}{
  \trans{\iter{\lcom}}{\id}{\lcom; \iter{\lcom}}
} &
\myfrac{
  \,
}{
  \trans{\lpcom}{\lpcom}{\cskip}
} \\
\myfrac{
  i \in \{1, 2\}
}{
  \trans{\lcom_1 \ldet \lcom_2}{\id}{\lcom_i}
} &
\myfrac{
  \,
}{
  \trans{\cskip \lseq \lcom}{\id}{\lcom}
}
&
\myfrac{
  \,
}{
  \trans{\iter{\lcom}}{\id}{\cskip}
}
\end{array}
$\par}

\medskip

${\xtwoheadrightarrow{}} \subseteq (\LCom \times \LState) \times (\ThreadID \times \LPCom) \times (\LCom \times \LState)$

{\centering
$\myfrac{
  \lstate' \in \intp{\tid}{\lpcom}{\lstate} \quad \trans{\lcom}{\lpcom}{\lcom'}
}{
  \sttrans{\lcom}{\lstate}{\tid}{\lpcom}{\lcom'}{\lstate'}
}$\par}
\caption{The operational semantics of sequential commands\label{fig:opsem}}
\end{figure}

\mypar{Semantics} We assume a set $\LState$ of concrete states of the memory,
ranged over by $\lstate$, and abstract states $\HState$, ranged over by
$\hstate$. The memory is shared among $\NThreads$ threads with thread
identifiers $\ThreadID = \{1, 2, \dots, \NThreads\}$, ranged over by $\tid$.

We assume that semantics of each primitive command $\lpcom$ is given by a
non-deterministic {\em state transformer} $\db{\lpcom}_\tid : \LState \to
\powerset{\LState}$, where $\tid \in \ThreadID$. For a state $\lstate$, the set
of states $\intp{\tid}{\lpcom}{\lstate}$ is the set of possible resulting
states for $\lpcom$ executed atomically in a state $\lstate$ and a thread
$\tid$. State transformers may have different semantics depending on a thread
identifier, which we use to introduce thread-local memory cells later in the
technical development. Analogously, we assume semantics of abstract primitive
commands with state transformers $\db{\hpcom}_\tid : \HState \to
\powerset{\HState}$, all of which update abstract states atomically. We also
assume a primitive command $\id \in \LPCom$ with the interpretation
$\intp{\tid}{\id}{\lstate} \triangleq \{\lstate \}$, and its abstract
counterpart $\id \in \HPCom$.

The sets of primitive commands $\LPCom$ and $\HPCom$ as well as corresponding
state transformers are parameters of our framework. In Figure~\ref{fig:opsem} we
give rules of operational semantics of sequential commands, which are
parametrised by semantics of primitive commands. That is, we define a transition
relation ${\xtwoheadrightarrow{}}~\subseteq~(\LCom \times \LState)
\times (\ThreadID \times \LPCom) \times (\LCom \times \LState)$, so that
$\sttrans{\lcom}{\lstate}{\tid}{\lpcom}{\lcom'}{\lstate'}$ indicates a
transition from $\lcom$ to $\lcom'$ updating the state from $\lstate$ to
$\lstate'$ with a primitive command $\lpcom$ in a thread $\tid$. The rules of
the operational semantics are standard.

Let us show how to define traditional control flow primitives, such as an
if-statement and a while-loop, in our programming language. Assuming a language
for arithmetic expressions, ranged over by $E$, and a function $\db{E}_\lstate$
that evaluates expressions in a given state $\lstate$, we define a
primitive command $\assume(E)$ that acts as a filter on states, choosing only
those where $E$ evaluates to non-zero values.
\[
\intp{\tid}{\assume(E)}{\lstate} \triangleq
  \mbox{ if } \db{E}_\lstate \not= 0
  \mbox{ then } \dc{\lstate}
  \mbox{ else } \emptyset.
\]
Using $\assume(E)$ and the C-style negation $!E$ in expressions, a conditional
and a while-loop can be implemented as the following commands:
\begin{gather*}
  {\tt if}\ E\ {\tt then}\ C_1\ {\tt else}\ C_2 \triangleq (\assume(E); C_1) \ldet (\assume(!E); C_2) \\
  {\tt while}\ E\ {\tt do}\ C \triangleq (\assume(E); C)^\lkstar; \assume(!E)
\end{gather*}



\newcommand{\done}[1]{{\sf done}(#1)}
\newcommand{\ldone}[2]{\left [{\sf done}(#2) \right ]_{#1}}
\newcommand{\lint}{{\bf i}}
\newcommand{\ltodo}[2]{\left [{\sf todo}(#2) \right ]_{#1}}
\newcommand{\todo}[1]{{\sf todo}(#1)}
\newcommand{\tstate}{\Delta}
\newcommand{\unit}{u}
\newcommand{\fault}{\lightning}
\newcommand{\vassn}{\rho}

\newcommand{\rimpl}{\Rrightarrow}
\newcommand{\req}{\mathrlap{\Lleftarrow\ }{\ \Rrightarrow}}

\newcommand{\Assn}{{\sf Assn}}
\newcommand{\Axioms}{{\sf Axioms}}
\newcommand{\Tokens}{{\sf Tokens}}
\newcommand{\VAssn}{{\sf VAssn}}
\newcommand{\Views}{{\sf Views}}

\newcommand{\slop}{\bullet}

\newcommand{\infer}[4]{\vdash_{#1} \left\{{#2}\right\} \,#3\, \left\{{#4}\right\}}
\newcommand{\ptofin}{\pto_{\rm fin}}
\newcommand{\safe}[4]{{\sf safe}_{#1}(#2, #3, #4)}
\newcommand{\sat}[4]{\,#2 \Vdash_{#1} \dc{#3}\dc{#4}}
\newcommand{\lp}[2]{{\sf LP}(#1, #2)}
\newcommand{\lptrans}[2]{{\sf LP}^*(#1, #2)}

\newcommand{\assert}[3]{{#1} = \evalf{{#3}}{{#2}}}

\section{The generic logic}\label{sec:logic}

In this section, we present our framework for designing program logics for
linearizability proofs. Given a concrete method and a corresponding abstract
method, we aim to demonstrate that the former has a linearization point either
within its code or in the code of another thread. The idea behind such proofs
is to establish simulation between concrete and abstract methods using
linearization points to determine when the abstract method has to make a
transition to match a given execution of the concrete method. To facilitate
such simulation-based proofs, we design our relational logic so that formulas
in it denote relations between concrete states, abstract states and special
{\it tokens}.

Tokens are our tool for reasoning about linearization points. At the beginning
of its execution in a thread $\tid$, each concrete method $m$ is given a token
$\todo{\hpcom_m}$ of the corresponding abstract primitive command $\hpcom_m$.
The token represents a one-time permission for the method to take effect, i.e.
to perform a primitive command $\hpcom_m$ on an abstract machine. When the
permission is used, a token $\todo{\hpcom_m}$ in a thread $\tid$ is irreversibly
replaced with $\done{\hpcom_m}$. Thus, by requiring that a method start its
execution in a thread $\tid$ with a token $\todo{\hpcom_m}$ and ends with
$\done{\hpcom_m}$, we ensure that it has in its code a linearization point. The
tokens of all threads are described by $\tstate \in \Tokens$:
\[
\Tokens = \ThreadID \pto (\dc{ \todo{\hpcom} \mid \hpcom \in \HPCom} \cup
  \dc{ \done{\hpcom} \mid \hpcom \in \HPCom})
\]

Reasoning about states and tokens in the framework is done with the help of
relational {\it views}. We assume a set $\Views$, ranged over by $p$, $q$ and
$r$, as well as a reification function $\reif{\ } : \Views \to \powerset{\LState
\times \HState \times \Tokens}$ that interprets views as ternary relations on
concrete states, abstract states and indexed sets of tokens.

\begin{dfn}
  A relational view monoid is a commutative monoid $(\Views, \vop, \unit)$,
  where $\Views$ is an underlying set of relational views, $\vop$ is a monoid
  operation and $\unit$ is a unit.
\end{dfn}

The monoid structure of relational views allows treating them as restrictions
on the environment of threads. Intuitively, each thread uses views to declare a
protocol that other threads should follow while operating with concrete states,
abstract states and tokens. Similarly to the separating conjunction from
separation logic, the monoid operation $\vop$ (view composition) applied to a
pair of views combines protocols of access to the state and ensures that they
do not contradict each other.

\mypar{Disjoint Concurrent Separation logic} To give an example of a view
monoid, we demonstrate the structure inspired by Disjoint Concurrent Separation
logic (DCSL). A distinctive feature of DCSL is that its assertions enforce a
protocol, according to which threads operate on disjoint pieces of memory. We
assume a set of values $\Vals$, of which a subset $\Loc \subseteq \Vals$
represents heap addresses. By letting $\LState = \HState = (\Loc \ptofin
\Vals) \cup \dc{\fault}$ we represent a state as either a finite partial
function from locations to values or an exceptional faulting state $\fault$,
which denotes the result of an invalid memory access.
We define an operation $\slop$ on states, which results in $\fault$ if either of
the operands is $\fault$, or the union of partial functions if their domains are
disjoint. Finally, we assume that the set $\LPCom$ consists of standard
heap-manipulating commands with usual semantics \cite{seplogic,views}.

We consider the view monoid $(\powerset{(\LState \setminus \dc{\fault}) \times
(\HState \setminus \dc{\fault}) \times \Tokens}, \vop_{\sf SL}, (\emptymap,
\emptymap, \emptymap))$: the unit is a triple of nowhere defined functions
$\emptymap$, and the view composition defined as follows:
\[
p \vop_{\sf SL} p' \triangleq 
\dc{(\lstate \slop \lstate', \hstate \slop \hstate', \tstate \uplus \tstate')
    \mid (\lstate, \hstate, \tstate) \in p 
    \land (\lstate', \hstate', \tstate') \in p'}.
\]
In this monoid, the composition enforces a protocol of exclusive ownership of
parts of the heap: a pair of views can be composed only if they do not
simultaneously describe the content of the same heap cell or a token. Since
tokens are exclusively owned in DCSL, they cannot be accessed by other threads,
which makes it impossible to express a helping mechanism with the DCSL views.
In \S\ref{sec:rgsep}, we present another instance of our framework and
reason about helping in it.

\mypar{Reasoning about linearization points}
We now introduce {\em action judgements}, which formalise
linearization-points-based approach to proving linearizability within our
framework.

\begin{wrapfigure}[5]{r}{8em}
\vspace{-35pt}
\begin{center}
$
\xymatrix @R=1.2em @C=1.5em {
\lstate \ar@{-}[rr]^{\reif{p \vop r}} \ar[dd]^{\db{\lpcom}} &&
\hstate \ar@{-->}[dd]^{\db{\hpcom}} \\ \\
\lstate' \ar@{--}[rr]^{\reif{q \vop r}} &&
\hstate'
}
$
\end{center}
\end{wrapfigure}
Let us assume that $\lpcom$ is executed in a concrete state $\lstate$ with an
abstract state $\hstate$ and a set of tokens $\tstate$ satisfying a precondition
$p$. According to the action judgement $\sat{\tid}{\lpcom}{p}{q}$, for every
update $\lstate' \in \intp{\tid}{\lpcom}{\lstate}$ of the concrete state, the
abstract state may be changed to $\hstate' \in
\intp{\tid'}{\hpcom}{\hstate}$ in order to satisfy the postcondition $q$,
provided that there is a token $\todo{\hpcom}$ in a thread $\tid'$. When the
abstract state $\hstate$ is changed and the token $\todo{\hpcom}$ of a thread
$\tid'$ is used, the concrete state update corresponds to a linearization
point, or to a regular transition otherwise.

\begin{dfn}\label{def:axiom} The action judgement $\sat{\tid}{\lpcom}{p}{q}$
holds, iff the following is true:
\begin{multline*}
\forall r, \lstate, \lstate', \hstate, \tstate \sothat 
  (\lstate, \hstate, \tstate) \in \reif{p \vop r} \land
  \lstate' \in \intp{\tid}{\lpcom}{\lstate} \implies{} \\
    \exists \hstate', \tstate' \sothat
      \lptrans{\hstate, \tstate}{\hstate', \tstate'}
      \land (\lstate', \hstate', \tstate') \in \reif{q \vop r},
\end{multline*}
where ${\sf LP}^*$ is the transitive closure of the following relation:
\begin{equation*}
\lp{\hstate, \tstate}{\hstate', \tstate'} \triangleq
\exists \tid', \hpcom \ldotp
\hstate' \in \intp{\tid'}{\hpcom}{\hstate}
\land
\tstate(\tid') = \todo{\hpcom}
\land
\tstate' = \tstate[\tid' : \done{\hpcom}],
\end{equation*}
and $f[x : a]$ denotes the function such that $f[x : a](x) = a$ and for any $y
\neq x$, $f[x : a](y) = f(y)$.
\end{dfn}\FloatBarrier
Note that depending on pre- and postconditions $p$ and $q$,
$\sat{\tid}{\lpcom}{p}{q}$ may encode a regular transition, a conditional or a
standard linearization point. It is easy to see that the latter is the case only
when in all sets of tokens $\tstate$ from $\reif{p}$ some thread $\tid'$ has a
todo-token, and in all $\Delta'$ from $\reif{q}$ it has a done-token.
Additionally, the action judgement may represent a conditional linearization
point of another thread, as the ${\sf LP}$ relation allows using tokens of other
threads.


Action judgements have a closure property that is important for thread-modular
reasoning: when $\sat{\tid}{\lpcom}{p}{q}$ holds, so does $\sat{\tid}{\lpcom}{p
* r}{q * r}$ for every view $r$. That is, execution of $\lpcom$ and a
  corresponding linearization point preserves every view $r$ that $p$ can be
  composed with. Consequently, when in every thread action judgements hold of
  primitive commands and thread's views, all threads together mutually agree on
  each other's protocols of the access to the shared memory encoded in their
  views. This enables reasoning about every thread in isolation with the
  assumption that its environment follows its protocol. Thus, the action
  judgements formalise the requirements that instances of our framework need to
  satisfy in order to be sound. In this regard action judgements are inspired
  by semantic judgements of the Views Framework \cite{views}. Our technical
  contribution is in formulating the essential requirements for thread-modular
  reasoning about linearizability of concurrent libraries with the
  linearization-point method and in extending the semantic judgement with them.

We let a {\it repartitioning implication} of views $p$ and $q$, written $p
\rimpl q$, denote $\forall r \sothat \reif{p * r} \subseteq \reif{q * r}$.
A repartitioning implication $p \rimpl q$ ensures that states satisfying $p$
also satisfy $q$ and additionally requires this property to preserve any view
$r$.

\mypar{Program logic} We are now in a position to present our generic logic for
linearizability proofs via the linearization-point method. Assuming a view
monoid and reification function as parameters, we define a minimal language
$\Assn$ for assertions $\lP$ and $\lQ$ denoting sets of views:
\begin{gather*}
  \lP, \lQ \in \Assn ::= 
    \vassn \mid \lP \vop \lQ \mid \lP \lor \lQ \mid \lP \rimpl \lQ 
    \mid \exists X \sothat \lP \mid \dots
\end{gather*}
The grammar includes view assertions $\vassn$, a syntax $\VAssn$ of which is a
parameter of the framework. Formulas of $\Assn$ may contain the standard
connectives from separation logic, the repartitioning implication and the
existential quantification over logical variables $X$, ranging over a set
$\LVars$.

\begin{figure}[t]
$
\begin{array}{l c r}
\evalf{\lP \vop \lQ}{\lint} = \evalf{\lP}{\lint} \vop \evalf{\lQ}{\lint}
& \quad\quad\quad\quad\quad &
\evalf{\lP \rimpl \lQ}{\lint} = \evalf{\lP}{\lint} \rimpl \evalf{\lQ}{\lint}
\\[1pt]
\evalf{\lP \lor \lQ}{\lint} = \evalf{\lP}{\lint} \lor \evalf{\lQ}{\lint}
& \hfill &
\evalf{\exists X \sothat \lP}{\lint} = \bigvee_{n \in \Vals} \evalf{\lP}{\lint[X : n]}
\end{array}
$
\caption{Satisfaction relation for the assertion language ${\sf Assn}$\label{fig:sem4frm}}
\end{figure}

Let us assume an interpretation of logical variables $\lint \in \Int = \LVars
\to \Vals$ that maps logical variables from $\LVars$ to values from a finite
set $\Vals$. In Figure~\ref{fig:sem4frm}, we define a function
$\db{\cdot}_\cdot : \Assn \times \Int \to \Views$ that we use to interpret
assertions. Interpretation of assertions is parametrised by $\db{\cdot}_\cdot :
\VAssn \times \Int \to \Views$. In order to interpret disjunction, we introduce
a corresponding operation on views and require the following properties from
it:
\begin{equation}\label{eq:disj}
\begin{array}{l@{\hspace{7em}}l}
\reif{p \lor q} = \reif{p} \cup \reif{q}&
(p \lor q) * r = (p * r) \lor (q * r)
\end{array}
\end{equation}

The judgements of the program logic take the form
$\infer{\tid}{\lP}{\lcom}{\lQ}$. In Figure~\ref{fig:proofrules}, we present the
proof rules, which are mostly standard. Among them, the {\sc Prim} rule is
noteworthy, since it encorporates the simulation-based approach to reasoning
about linearization points introduced by action judgements. The {\sc Frame}
rule applies the idea of local reasoning from separation logic \cite{seplogic}
to views. The {\sc Conseq} enables weakening a precondition or a postcondition
in a proof judgement and uses repartitioning implications to ensure the
thread-modularity of the weakened proof judgement.

\begin{figure}[t]
$
\begin{array}{r@{\hspace{2pt}}c@{\hspace{2pt}}r@{\hspace{2pt}}c}
\mbox{\footnotesize\sc(Prim)}\hspace{0.25cm}&
\dfrac{
  \forall \lint \sothat \sat{\tid}{\lpcom}{\evalf{\lP}{\lint}}{\evalf{\lQ}{\lint}}
}{
  \infer{\tid}{\lP}{\lpcom}{\lQ}
} &
\mbox{\footnotesize\sc(Seq)}\hspace{0.25cm}&
\dfrac{
  \infer{\tid}{\lP}{\lcom_1}{\lP'} \quad \infer{\tid}{\lP'}{\lcom_2}{\lQ} 
}{
  \infer{\tid}{\lP}{\lcom_1 \lseq \lcom_2}{\lQ}
}\\[9pt]
\mbox{\footnotesize\sc(Frame)}\hspace{0.25cm}&
\dfrac{
  \infer{\tid}{\lP}{\lcom}{\lQ}
}{
  \infer{\tid}{\lP \vop \lR}{\lcom}{\lQ \vop \lR}
} &
\mbox{\footnotesize\sc(Disj)}\hspace{0.25cm}&
\dfrac{
  \infer{\tid}{\lP_1}{\lcom}{\lQ_1}\quad\infer{\tid}{\lP_2}{\lcom}{\lQ_2}
}{
  \infer{\tid}{\lP_1 \lor \lP_2}{\lcom}{\lQ_1 \lor \lQ_2}
}\\[9pt]
\mbox{\footnotesize\sc(Ex)}\hspace{0.25cm}&
\dfrac{
  \infer{\tid}{\lP}{\lcom}{\lQ}
}{
  \infer{\tid}{\exists X \sothat \lP}{\lcom}{\exists X \sothat \lQ}
} &
\mbox{\footnotesize\sc(Choice)}\hspace{0.25cm}&
\dfrac{
  \infer{\tid}{\lP}{\lcom_1}{\lQ} \quad \infer{\tid}{\lP}{\lcom_2}{\lQ} 
}{
  \infer{\tid}{\lP}{\lcom_1 \ldet \lcom_2}{\lQ}
} \\[9pt]
\mbox{\footnotesize\sc(Iter)}\hspace{0.25cm}&
\dfrac{
  \infer{\tid}{\lP}{\lcom}{\lP}
}{
  \infer{\tid}{\lP}{\lcom^\lkstar}{\lP}
} &
\mbox{\footnotesize\sc(Conseq)}\hspace{0.25cm}&
\dfrac{
  \lP' \Rrightarrow \lP \quad \infer{\tid}{\lP}{\lcom}{\lQ} \quad \lQ \Rrightarrow \lQ'
}{
  \infer{\tid}{\lP'}{\lcom}{\lQ'}
}
\end{array}$
\caption{Proof rules\label{fig:proofrules}}
\end{figure}

\mypar{Semantics of proof judgements}
We give semantics to judgements of the program logic by lifting the
requirements of action judgements to sequential commands.
\begin{dfn}[Safety Judgement]\label{def:safety}
  We define ${\sf safe}_\tid$ as the greatest relation such that the following holds whenever $\safe{\tid}{p}{\lcom}{q}$ does:
  \begin{itemize}
    \item if $\lcom \not= \cskip$, then $\forall \lcom', \lpcom \sothat 
          \trans{\lcom}{\lpcom}{\lcom'} \implies \exists p' \sothat 
          \sat{\tid}{\lpcom}{p}{p'} \land \safe{\tid}{p'}{\lcom'}{q}$,
    \item if $\lcom = \cskip$, then $p \rimpl q$.
  \end{itemize}
\end{dfn}
\begin{lem}\label{lem:logic2safety}
    $\forall \tid, \lP, \lcom, \lQ \sothat \infer{\tid}{\lP}{\lcom}{\lQ} \implies 
        \forall \lint \sothat \safe{\tid}{\evalf{\lP}{\lint}}{\lcom}{\evalf{\lQ}{\lint}}.$
\end{lem}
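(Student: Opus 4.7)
The plan is to proceed by induction on the derivation of $\infer{\tid}{\lP}{\lcom}{\lQ}$, with the interpretation $\lint$ fixed throughout. Abbreviating $p = \evalf{\lP}{\lint}$, $q = \evalf{\lQ}{\lint}$, and similarly for every intermediate assertion appearing in the derivation, the goal in each case is to produce $\safe{\tid}{p}{\lcom}{q}$. Since $\safe{\tid}{\cdot}{\cdot}{\cdot}$ is the greatest relation satisfying Definition~\ref{def:safety}, most cases amount to exhibiting a candidate relation closed under the two clauses there and then appealing to coinduction.

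Before the case analysis I would establish three auxiliary closure properties of safety. The \emph{frame} property, $\safe{\tid}{p}{\lcom}{q} \implies \safe{\tid}{p \vop r}{\lcom}{q \vop r}$, is proved coinductively; the required closure for action judgements is immediate because the universal quantification over frames in Definition~\ref{def:axiom} absorbs the extra $r$ (using commutativity and associativity of $\vop$), and the $\cskip$ clause reduces to $\rimpl$ being preserved by $\vop$. The \emph{consequence} property, that $p' \rimpl p$ and $q \rimpl q'$ and $\safe{\tid}{p}{\lcom}{q}$ together imply $\safe{\tid}{p'}{\lcom}{q'}$, uses that action judgements are closed under pre- and post-composition with $\rimpl$ (since $p' \rimpl p$ means $\reif{p' \vop r} \subseteq \reif{p \vop r}$ for every $r$), together with transitivity of $\rimpl$ for the $\cskip$ case. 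The \emph{sequencing} property, that $\safe{\tid}{p}{\lcom_1}{p'}$ and $\safe{\tid}{p'}{\lcom_2}{q}$ imply $\safe{\tid}{p}{\lcom_1 \lseq \lcom_2}{q}$, is proved coinductively with a case split: if $\lcom_1 = \cskip$, the transition $\trans{\cskip \lseq \lcom_2}{\id}{\lcom_2}$ is handled by combining $p \rimpl p'$ (from safety of $\cskip$) with safety of $\lcom_2$ via the consequence lemma; otherwise the transition propagates from $\lcom_1$ and the coinductive hypothesis applies.

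Equipped with these lemmas, each proof rule is dispatched routinely. \textsc{Prim} uses the only possible transition $\trans{\lpcom}{\lpcom}{\cskip}$ with intermediate view $q$, noting that safety of $\cskip$ at $q$ holds because $q \rimpl q$. \textsc{Seq}, \textsc{Frame}, and \textsc{Conseq} follow directly from the three auxiliary lemmas. \textsc{Choice}'s two transitions are labelled $\id$, which is handled by the zero-step instance of ${\sf LP}^*$ giving $\sat{\tid}{\id}{p}{p}$, after which the premise of the chosen branch applies. \textsc{Disj} uses property~\eqref{eq:disj} and distributivity of $\vop$ over $\lor$ to decompose each transition into one supported by a single disjunct and reassemble the postcondition. \textsc{Ex} threads witnesses through the coinduction using interpretations of the form $\lint[X:n]$. \textsc{Iter} is the genuinely coinductive case: to obtain $\safe{\tid}{p}{\iter{\lcom}}{p}$ from $\safe{\tid}{p}{\lcom}{p}$, the termination transition to $\cskip$ is trivial from $p \rimpl p$, while the transition to $\lcom \lseq \iter{\lcom}$ is discharged by the sequencing lemma applied to body safety together with the coinductive hypothesis.

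The main obstacle I anticipate is the sequencing lemma: it must interleave a coinductive step with the base case triggered once $\lcom_1$ reduces to $\cskip$, and it underpins both \textsc{Seq} and \textsc{Iter}. A secondary subtlety is that the \textsc{Choice} and \textsc{Iter} cases both rely on ${\sf LP}^*$ admitting a zero-step instance, so ${\sf LP}^*$ must be read as the reflexive-transitive closure of ${\sf LP}$; otherwise $\sat{\tid}{\id}{p}{p}$ would fail for views with empty token maps and the identity transition would need to be accommodated by a separate axiom.
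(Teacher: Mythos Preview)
Your overall architecture---rule induction on the derivation, with separately established coinductive closure properties of $\sfsafe$---is exactly the paper's. The paper simply packages \textsc{Frame}, \textsc{Choice}, \textsc{Iter}, \textsc{Seq}, \textsc{Conseq}, and \textsc{Disj} into one auxiliary lemma (Lemma~\ref{lem:safe}) before the rule induction, and handles \textsc{Ex} by observing that $\Vals$ is finite so that $\evalf{\exists X.\lP}{\lint}$ is a finite $\lor$, reducing that case to \textsc{Disj} rather than ``threading witnesses''.

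Your \textsc{Iter} argument, however, has a real gap. You propose to discharge the transition $\trans{\iter{\lcom}}{\id}{\lcom \lseq \iter{\lcom}}$ by invoking the sequencing lemma on $\safe{\tid}{p}{\lcom}{p}$ together with ``the coinductive hypothesis'' $\safe{\tid}{p}{\iter{\lcom}}{p}$. But the sequencing lemma is a statement about $\sfsafe$, and $\safe{\tid}{p}{\iter{\lcom}}{p}$ is precisely the goal; you cannot feed it back into a lemma about $\sfsafe$ mid-proof. A coinductive proof exhibits a candidate $X$ with $X \subseteq F_\tid(X)$: after the unfold you need $(p,\lcom \lseq \iter{\lcom},p) \in X$, not $\in \sfsafe$, so $X$ must already contain the sequenced forms. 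The paper handles this by taking as candidate
\[
\begin{array}{l}
\{(p,\iter{\lcom},p)\mid (p,\lcom,p)\in\sfsafe\} \;\cup\; \{(p,\cskip,p)\}\\[2pt]
{}\cup\; \{(p_1,\lcom_1\lseq\iter{\lcom_2},p_2)\mid (p_1,\lcom_1,p_2),(p_2,\lcom_2,p_2)\in\sfsafe\}
\end{array}
\]
and verifying closure under $F_\tid$ directly---in effect re-running the \textsc{Seq} argument inside the \textsc{Iter} coinduction rather than calling it as a black box. Your sequencing lemma would be usable here only after establishing it as a sound ``up-to'' principle for $F_\tid$, which is extra work you have not mentioned. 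Your side remark that $\sat{\tid}{\id}{p}{p}$ forces ${\sf LP}^*$ to be the \emph{reflexive}-transitive closure is correct and does catch an imprecision in Definition~\ref{def:axiom}.
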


We can understand the safety judgement
$\safe{\tid}{\evalf{\lP}{\lint}}{\lcom}{\evalf{\lQ}{\lint}}$ as an obligation
to create a sequence of views $\evalf{\lP}{\lint} = p_1, p_2, \dots, p_{n+1} =
\evalf{\lQ}{\lint}$ for each finite trace $\lpcom_1, \lpcom_2, \dots, \lpcom_n$
of $\lcom$ to justify each transition with action judgements
$\sat{\tid}{\lpcom_1}{p_1}{p_2}$, \dots, $\sat{\tid}{\lpcom_n}{p_n}{p_{n+1}}$.
Thus, when $\safe{\tid}{\evalf{\lP}{\lint}}{\lcom}{\evalf{\lQ}{\lint}}$ holds,
it ensures that every step of the machine correctly preserves a correspondence
between a concrete and abstract execution. Intuitively, the safety judgement
lifts the simulation between concrete and abstract primitive commands
established with action judgements to the implementation and specification of a
method.

In Lemma~\ref{lem:logic2safety}, we establish that the proof judgements of the
logic imply the safety judgements. As a part of the proof, we show that each of
the proof rules of the logic holds of safety judgements. Due to space
constraints, this and other proofs are given in the extended version of the
paper~\cite{ext}.

\newcommand{\lin}{\sqsubseteq}
\newcommand{\hsem}[1]{\mathcal{H}\db{#1}}
\newcommand{\hsemn}[2]{\mathcal{H}_{#1}\db{#2}}

\newcommand{\call}[2]{{\sf call}\ #1(#2)}
\newcommand{\ret}[2]{{\sf ret}\ #1(#2)}

\newcommand{\ccall}[1]{{\sf call}\ #1}
\newcommand{\cret}[2]{{\sf ret}\ #1(#2)}

\newcommand{\idle}{{\sf idle}}

\newcommand{\tidvar}{\tid}

\newcommand{\tp}{\tau}
\newcommand{\atp}{\mathcal{T}}

\newcommand{\preVA}{\mathcal{P}}
\newcommand{\postVA}{\mathcal{Q}}

\newcommand{\hist}{h}

\newcommand{\aval}{a}
\newcommand{\rval}{v}

\section{Soundness}\label{sec:soundness}


In this section, we formulate linearizability for libraries. We also formulate
the soundness theorem, in which we state proof obligations that are necessary
to conclude linearizability.

\mypar{Libraries} We assume a set of method names ${\sf Method}$, ranged over by
$m$, and consider {\em a concrete library} $\llib : {\sf Method} \pto ((\Vals
\times \Vals) \to \LCom)$ that maps method names to commands from $\LCom$, which
are parametrised by a pair of values from $\Vals$. For a given method name $m
\in \dom(\llib)$ and values $\aval, \rval \in \Vals$, a command $\llib(m, \aval,
\rval)$ is an implementation of $m$, which accepts $\aval$ as a method argument
and either returns $\rval$ or does not terminate. Such an unusual way of
specifying method's arguments and return values significantly simplifies further
development, since it does not require modelling a call stack.

Along with the library $\llib$ we consider its specification in the form of an
abstract library $\hlib \in {\sf Method} \pto ((\Vals \times \Vals) \to
\HPCom)$ implementing a set of methods $\dom(\hlib)$ atomically as abstract
primitive commands $\dc{\hlib(m, \aval, \rval) \mid m \in \dom(\hlib)}$
parametrised by an argument $\aval$ and a return value $\rval$. Given a method
$m \in {\sf Method}$, we assume that a parametrised abstract primitive command
$\hlib(m)$ is intended as a specification for $\llib(m)$.

\mypar{Linearizability} The linearizability assumes a complete isolation
between a library and its client, with interactions limited to passing values of
a given data type as parameters or return values of library methods.
Consequently, we are not interested in internal steps recorded in library
computations, but only in the interactions of the library with its client. We
record such interactions using {\em histories}, which are traces including only
events $\call{m}{\aval}$ and $\ret{m}{\rval}$ that indicate an invocation of a
method $m$ with a parameter $\aval$ and returning from $m$ with a return value
$\rval$, or formally:
\[
  \hist::= \emptytr \mid (\tid, \call{m}{\aval}) \cc \hist
  \mid (\tid, \ret{m}{\rval}) \cc\hist.
\]
Given a library $\llib$, we generate all finite histories of $\llib$ by
considering $\NThreads$ threads repeatedly invoking library methods in any
order and with any possible arguments. The execution of methods is described by
semantics of commands from \S~\ref{sec:proglang}.

We define a {\em thread pool} $\tp : \ThreadID \to (\idle \uplus (\LCom \times
\Vals))$ to characterise progress of methods execution in each thread. The case
of $\tp(\tid) = \idle$ corresponds to no method running in a thread $\tid$. When
$\tp(\tid) = (\lcom, \rval)$, to finish some method returning $\rval$ it remains
to execute $\lcom$.
\begin{dfn}\label{def:hsem}
We let $\hsem{\llib, \lstate} = \bigcup_{n \geq 0} \hsemn{n}{\llib, (\lambda
\tid \ldotp \idle), \lstate}$ denote the set of all possible histories of a
library $\llib$ that start from a state $\lstate$, where for a given thread
pool $\tp$, $\hsemn{n}{\llib, \tp, \lstate}$ is defined as a set of histories
such that $\hsemn{0}{\llib, \tp, \lstate} \triangleq \dc{ \emptytr }$ and:
\[
\begin{array}{rcl}
\hsemn{n}{\llib, \tp, \lstate}
& \triangleq &
\{
  ((\tid, \call{m}{\aval}) ::\hist) \mid a \in \Vals \land m \in \dom(\llib) \land
  \tp(\tid) = \idle \land{} \\ & & \hspace{9.25em}
  \exists \rval \ldotp 
  \hist \in \hsemn{n-1}{\llib, \tp[\tid : (\llib(m, \aval, \rval), \rval)], \lstate}
\} \\ & & {} \cup \{
  \hist \mid \exists \tid, \lpcom, \lcom, \lcom', \lstate', \rval \ldotp
  \tp(\tid) = (\lcom, \rval) \land 
  \sttrans{\lcom}{\lstate}{\tid}{\lpcom}{\lcom'}{\lstate'} \land{}
  \\ & & \hspace{9.25em}
  \hist \in \hsemn{n-1}{\llib, \tp[\tid : (\lcom', \rval)], \lstate'}
\} \\ & & {} \cup \{
  ((\tid, \ret{m}{\rval}) ::\hist) \mid m \in \dom(\llib) \land
  \tp(\tid) = (\cskip, \rval) \land{} \\ & & \hspace{9.25em}
  \hist \in \hsemn{n-1}{\llib, \tp[\tid : \idle], \lstate}
\}
\end{array}
\]
\end{dfn}
Thus, we construct the set of all finite histories inductively with all threads
initially idling. At each step of generation, in any idling thread $\tid$ any
method $m \in \dom(\llib)$ may be called with any argument $\aval$ and an
expected return value $\rval$, which leads to adding a command $\llib(m, \aval,
\rval)$ to the thread pool of a thread $\tid$. Also, any thread $\tid$, in which
$\tp(\tid) = (\lcom, \rval)$, may do a transition
$\sttrans{\lcom}{\lstate}{\tid}{\lpcom}{\lcom'}{\lstate'}$ changing a command in
the thread pool and the concrete state. Finally, any thread that has finished
execution of a method's command ($\tp(\tid) = (\cskip, \rval)$) may become idle
by letting $\tp(\tid) = \idle$.

We define $\hsemn{n}{\hlib, \atp, \hstate}$ analogously and let the set of all
histories of an abstract library $\hlib$ starting from the initial state
$\hstate$ be $\hsem{\hlib, \hstate} = \bigcup_{n \geq 0} \hsemn{n}{\hlib,
(\lambda \tid \ldotp \idle), \hstate}$.

\begin{dfn}
  For libraries $\llib$ and $\hlib$ such that $\dom(\llib) = \dom(\hlib)$, we
  say that $\hlib$ \textbfit{linearizes} $\llib$ in the states $\lstate$ and
  $\hstate$, written $(\llib, \lstate) \lin (\hlib, \hstate)$, if $\hsem{\llib,
  \lstate} \subseteq \hsem{\hlib, \hstate}$.  
\end{dfn}
That is, an abstract library $\hlib$ linearizes $\llib$ in the states $\lstate$
and $\hstate$, if every history of $\llib$ can be reproduced by $\hlib$. The
definition is different from the standard one \cite{linearizability}: we use
the result obtained by Gotsman and Yang \cite{linown} stating that the plain
subset inclusion on the sets of histories produced by concrete and abstract
libraries is equivalent to the original definition of linearizability.

\mypar{Soundness w.r.t. linearizability} We now explain proof obligations that
we need to show for every method $m$ of a concrete library $\llib$ to conclude
its linearizability. Particularly, for every thread $\tid$, argument $\aval$,
return value $\rval$, and a command $\llib(m, \aval, \rval)$ we require that
there exist assertions $\preVA(\tid, \hlib(m, \aval, \rval))$ and $\postVA(\tid,
\hlib(m, \aval, \rval))$, for which the following Hoare-style specification
holds:
\begin{equation}\label{eq:soundness1}
    \infer{\tid}{
      \preVA(\tid, \hlib(m, \aval, \rval))
    }{
      \llib(m, \aval, \rval)
    }{
      \postVA(\tid, \hlib(m, \aval, \rval))
    }
\end{equation}
In the specification of $\llib(m, \aval, \rval)$, $\preVA(\tid, \hlib(m, \aval,
\rval))$ and $\postVA(\tid, \hlib(m, \aval, \rval))$ are assertions parametrised
by a thread $\tid$ and an abstract command $\hlib(m, \aval, \rval)$. We require
that in a thread $\tid$ of all states satisfying $\preVA(\tid, \hlib(m, \aval,
\rval))$ and $\postVA(\tid, \hlib(m, \aval, \rval))$ there be only tokens
$\todo{\hlib(m, \aval, \rval)}$ and $\done{\hlib(m, \aval, \rval)}$
respectively:
\begin{multline}\label{eq:soundness2}
\forall \lint, \tid, \lstate, \hstate, \tstate, r \ldotp{} \\[-4pt]
\begin{array}{@{}l@{}}
((\lstate, \hstate, \tstate) \in 
\reif{\evalf{\preVA(\tid, \hlib(m, \aval, \rval))}{\lint} \vop r} \implies
\tstate(\tid) = \todo{\hlib(m, \aval, \rval)})
\\ {} \land
((\lstate, \hstate, \tstate) \in
\reif{\evalf{\postVA(\tid, \hlib(m, \aval, \rval))}{\lint} \vop r} \implies
\tstate(\tid) = \done{\hlib(m, \aval, \rval)})
\end{array}
\end{multline}
Together, (\ref{eq:soundness1}) and (\ref{eq:soundness2}) impose a requirement
that a concrete and an abstract method return the same return value $\rval$. We
also require that the states satisfying the assertions only differ by a token
of a thread $\tid$:
\begin{multline}\label{eq:soundness3}
\forall \lint, \tid, \hpcom, \hpcom', r, \tstate \ldotp 
  (\lstate, \hstate, \tstate [\tid : \done{\hpcom}]) \in 
    \reif{\evalf{\postVA(\tid, \hpcom)}{\lint} \vop r} \iff{} \\
  (\lstate, \hstate, \tstate [\tid : \todo{\hpcom'}]) \in 
    \reif{\evalf{\preVA(\tid, \hpcom')}{\lint} \vop r}.
\end{multline}

\begin{thm}\label{thm:lin}
For given libraries $\llib$ and $\hlib$ together with states $\lstate$ and
$\hstate$, $(\llib, \lstate) \lin (\hlib, \hstate)$ holds, if $\dom(\llib) =
\dom(\hlib)$ and (\ref{eq:soundness1}), (\ref{eq:soundness2}) and
(\ref{eq:soundness3}) hold for every method $m$, thread $\tid$ and values
$\aval$ and $\rval$.
\end{thm}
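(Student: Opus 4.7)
The plan is to prove $\hsem{\llib, \lstate} \subseteq \hsem{\hlib, \hstate}$ by showing that any finite concrete computation producing a history $\hist$ can be matched, step-for-step, by an abstract computation producing the same history. Concretely, I would proceed by induction on the number of steps $n$ used to derive $\hist \in \hsemn{n}{\llib, (\lambda \tid\sothot \idle), \lstate}$, maintaining a simulation invariant that glues together per-thread views via the monoid operation and uses Lemma~\ref{lem:logic2safety} together with (\ref{eq:soundness1})--(\ref{eq:soundness3}) to drive the abstract side.

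The simulation invariant I would maintain after $n$ steps, producing concrete state $\lstate_n$ and thread pool $\tp_n$, asserts the existence of an abstract state $\hstate_n$, an abstract thread pool $\atp_n$, tokens $\tstate_n$, a logical interpretation $\lint$, and per-thread views $p_\tid$ such that $(\lstate_n, \hstate_n, \tstate_n) \in \reif{\vopall_\tid p_\tid}$ and, for each thread $\tid$: if $\tp_n(\tid) = \idle$, then $\atp_n(\tid) = \idle$, the token for $\tid$ is absent, and $p_\tid$ reduces to the unit; whereas if $\tp_n(\tid) = (\lcom, \rval)$ arose from a call to $m(\aval)$ expected to return $\rval$, then $\safe{\tid}{p_\tid}{\lcom}{\evalf{\postVA(\tid, \hlib(m, \aval, \rval))}{\lint}}$ holds, and $\atp_n(\tid)$ is either $(\hlib(m, \aval, \rval), \rval)$ (before its linearization point) or $(\cskip, \rval)$ (after), in agreement with whether $\tstate_n(\tid)$ is $\todo{\hlib(m, \aval, \rval)}$ or $\done{\hlib(m, \aval, \rval)}$ as mandated by (\ref{eq:soundness2}).

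The inductive step splits into the three transition cases from Definition~\ref{def:hsem}. For a call event $(\tid, \call{m}{\aval})$ in the concrete run, I would choose the same $\rval$ on the abstract side, extend the token assignment with $\todo{\hlib(m, \aval, \rval)}$, and use (\ref{eq:soundness3}) to rewrite the combined view into one that separates out $\evalf{\preVA(\tid, \hlib(m, \aval, \rval))}{\lint}$; then (\ref{eq:soundness1}) together with Lemma~\ref{lem:logic2safety} supplies the safety judgement for $\llib(m, \aval, \rval)$, establishing the invariant in thread $\tid$. For an internal primitive step $\sttrans{\lcom}{\lstate_n}{\tid}{\lpcom}{\lcom'}{\lstate_{n+1}}$, I would invoke $\safe{\tid}{p_\tid}{\lcom}{\cdots}$ to obtain $p'_\tid$ with $\sat{\tid}{\lpcom}{p_\tid}{p'_\tid}$, then unfold Definition~\ref{def:axiom} using $r = \vopall_{\tid' \neq \tid} p_{\tid'}$ as the frame; the resulting $\lptrans{\hstate_n, \tstate_n}{\hstate_{n+1}, \tstate_{n+1}}$ is a finite chain of single ${\sf LP}$ steps, each of which flips some $\tstate(\tid')$ from $\todo{\hpcom}$ to $\done{\hpcom}$ and witnesses a legal abstract transition $\sttrans{\hlib(m', \aval', \rval')}{\hstate}{\tid'}{\hpcom}{\cskip}{\hstate'}$, so the chain can be replayed as matching internal transitions of $\hlib$. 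For a return event, (\ref{eq:soundness2}) applied to the postcondition guarantees $\tstate(\tid) = \done{\hlib(m, \aval, \rval)}$, hence by the invariant $\atp_n(\tid) = (\cskip, \rval)$, and the abstract library can emit the matching $(\tid, \ret{m}{\rval})$.

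The main obstacle I anticipate is the internal-step case: correctly constructing the frame $r$ from all other threads' views and verifying that the chain of ${\sf LP}$ transitions produced by the action judgement corresponds faithfully to a sequence of \emph{well-typed} transitions in $\hsemn{\cdot}{\hlib, \atp_n, \hstate_n}$, i.e.\ that the thread $\tid'$ flipped at each ${\sf LP}$ step really does have $\atp_n(\tid') = (\hlib(m', \aval', \rval'), \rval')$ for the specific $\hpcom = \hlib(m', \aval', \rval')$ recorded in its token. This is exactly what the invariant's tight coupling between $\tstate_n(\tid')$ and $\atp_n(\tid')$ is designed to ensure, and it is precisely the role of conditions (\ref{eq:soundness2}) and (\ref{eq:soundness3}) to synchronise calls and returns with token allocation and retirement. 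Once this is nailed down, concatenating the matched abstract transitions over all $n$ steps yields the desired derivation $\hist \in \hsem{\hlib, \hstate}$.
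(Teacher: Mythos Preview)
Your overall architecture matches the paper's: induction on the derivation length $n$ of $\hist \in \hsemn{n}{\llib,\tp,\lstate}$, a per-thread simulation invariant glued together via $\vopall_\tid p_\tid$, a three-way case split on call/internal/return, and in the internal case framing the action judgement with $r = \vopall_{\tid'\neq\tid} p_{\tid'}$ and replaying the resulting ${\sf LP}^*$ chain as abstract transitions. That part is right and is exactly how the paper does it.

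There is, however, a genuine gap in your invariant for \emph{idle} threads. You stipulate that when $\tp_n(\tid)=\idle$ the token for $\tid$ is absent and $p_\tid$ is the unit. With this choice the call case does not go through: to obtain the precondition view $\evalf{\preVA(\tid,\hlib(m,\aval,\rval))}{\lint}$ and a fresh $\todo{}$-token you appeal to (\ref{eq:soundness3}), but (\ref{eq:soundness3}) is a bi-implication between a state carrying a $\done{\hpcom}$-token in thread $\tid$ together with $\postVA(\tid,\hpcom)$ in the view, and a state carrying a $\todo{\hpcom'}$-token together with $\preVA(\tid,\hpcom')$. It gives you no way to manufacture a $\todo{}$-token out of ``no token, unit view''. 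Dually, at a return event your invariant would force you to \emph{delete} the $\done{}$-token and collapse $p_\tid$ to the unit, and again none of (\ref{eq:soundness1})--(\ref{eq:soundness3}) license that.

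The fix (and this is what the paper does) is to keep a $\done{}$-token around for idle threads: in the idle case require $p_\tid \Rrightarrow \evalf{\postVA(\tid,\hpcom)}{\lint}$ for some $\hpcom$. Then at a call you first use this repartitioning implication to land in $\reif{\evalf{\postVA(\tid,\hpcom)}{\lint} \vop \vopall_{\tid'\neq\tid} p_{\tid'}}$, and (\ref{eq:soundness3}) swaps the $\done{\hpcom}$ for $\todo{\hlib(m,\aval,\rval)}$ while replacing $\postVA$ by $\preVA$. At a return you simply set $p_\tid$ to the postcondition view (retaining the $\done{}$-token), which trivially re-establishes the idle clause. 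This also determines the shape of the base case: the initial configuration must satisfy $\vopall_\tid \evalf{\postVA(\tid,\_)}{\lint}$, i.e.\ every thread starts with some $\done{}$-token. With that correction your plan is essentially the paper's proof.
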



\newcommand{\rgassert}[2]{(#1) \models #2}
\newcommand{\rginfer}[5]{#2 \infer{#1}{#3}{#4}{#5}}
\newcommand{\pRG}{P}
\newcommand{\qRG}{Q}
\newcommand{\rely}{R}
\newcommand{\guar}{G}
\newcommand{\lRely}{\mathcal{R}}
\newcommand{\lGuar}{\mathcal{G}}
\newcommand{\lPRG}{\pi}
\newcommand{\lAct}{\mathcal{A}}

\newcommand{\LStateH}{\LState_{\sf H}}
\newcommand{\HStateH}{\HState_{\sf H}}

\newcommand{\locs}{l}
\newcommand{\alocs}{L}
\newcommand{\loc}{c}
\newcommand{\aloc}{c}
\newcommand{\tids}{T}

\newcommand{\rgact}[2]{#1 \leadsto #2}

\newcommand{\vopRG}{\mathbin{\vop{}}}

\section{The RGSep-based Logic\label{sec:rgsep}}


In this section, we demonstrate an instance of the generic proof system that is
capable of handling algorithms with helping. This instance is based on RGSep
\cite{rgsep}, which combines rely-guarantee reasoning \cite{rg} with separation
logic \cite{seplogic}.

The main idea of the logic is to partition the state into several thread-local
parts (which can only be accessed by corresponding threads) and the shared part
(which can be accessed by all threads). The partitioning is defined by proofs
in the logic: an assertion in the code of a thread restricts its local state
and the shared state. In addition, the partitioning is dynamic, meaning that
resources, such as a part of a heap or a token, can be moved from the local
state of a thread into the shared state and vice versa. By transferring a token
to the shared state, a thread gives to its environment a permission to change
the abstract state. This allows us to reason about environment helping that
thread.

\mypar{The RGSep-based view monoid} Similarly to DCSL, we assume that states
represent heaps, i.e. that $\LState = \HState = \Loc \ptofin \Vals \uplus
\dc{\fault}$, and we denote all states but a faulting one with $\LStateH =
\HStateH = \Loc \ptofin \Vals$. We also assume a standard set of
heap-manipulating primitive commands with usual semantics.

We define views as triples consisting of three components: a predicate $\pRG$
and binary relations $\rely$ and $\guar$. A predicate $\pRG \in
\powerset{(\LStateH \times \HStateH \times \Tokens)^2}$ is a set of pairs $(l,
s)$ of local and shared parts of the state, where each part consists of
concrete state, abstract state and tokens. {\em Guarantee} $\guar$ and {\em
rely} $\rely$ are relations from $\powerset{(\LState \times \HState \times
\Tokens)^2}$, which summarise how individual primitive commands executed by the
method's thread (in case of $\guar$) and the environment (in case of $\rely$)
may change the shared state. Together guarantee and rely establish a protocol
that views of the method and its environment respectively must agree on each
other's transitions, which allows us to reason about every thread separately
without considering local state of other threads, assuming that they follow the
protocol. The agreement is expressed with the help of a well-formedness
condition on views of the RGSep-based monoid that their predicates must be {\em
stable} under rely, meaning that their predicates take into account whatever
changes their environment can make:
\[
\stable(\pRG, \rely) \triangleq \forall l, s, s' \ldotp
(l, s) \in \pRG \land
(s, s') \in \rely \implies
(l, s') \in \pRG.
\]
\noindent
A predicate that is stable under rely cannot be invalidated by any state
transition from rely. Stable predicates with rely and guarantee relations form
the view monoid with the underlying set of views 
$
\Views_\RGsep = \{ (\pRG, \rely, \guar) \mid \stable(\pRG, \rely) \} \cup \{ \bot \},
$
where $\bot$ denotes a special {\em inconsistent} view with the empty
reification.
The reification of other views simply joins shared and local parts of the
state:
\[
\reif{(\pRG, \rely, \guar)} = \{ (\lstate_l \slop \lstate_s, \hstate_l \slop
\hstate_s, \tstate_l \uplus \tstate_s) \mid ((\lstate_l, \hstate_l, \tstate_l),
(\lstate_s, \hstate_s, \tstate_s))
\in \pRG \}.
\]

Let an operation $\cdot$ be defined on states analogously to DCSL. Given
predicates $\pRG$ and $\pRG'$, we let $\pRG \vop \pRG'$ be a predicate denoting
the pairs of local and shared states in which the local state can be divided
into two substates such that one of them together with the shared state
satisfies $\pRG$ and the other together with the shared state satisfies
$\pRG'$:
\[
\pRG \vop \pRG' \triangleq \{ ((\lstate_l \slop \lstate'_l, \hstate_l \slop
\hstate'_l, \tstate_l \uplus \tstate'_l), s) \mid ((\lstate_l, \hstate_l,
\tstate_l), s) \in \pRG \land ((\lstate'_l, \hstate'_l, \tstate'_l), s) \in
\pRG' \}
\]
We now define the monoid operation $\vop$, which we use to compose views of
different threads. When composing views $(\pRG, \rely, \guar)$ and $(\pRG',
\rely', \guar')$ of the parallel threads, we require predicates of both
to be immune to interference by all other threads and each other. Otherwise,
the result is inconsistent:
\[
(\pRG, \rely, \guar) \vop (\pRG', \rely', \guar') \triangleq
  \mbox{if } \guar \subseteq  \rely' \land \guar' \subseteq \rely
  \mbox{ then }
  (\pRG \vop \pRG', \rely \cap \rely', \guar \cup \guar')
  \mbox{ else } \bot.
\]
That is, we let the composition of views be consistently defined when the state
transitions allowed in a guarantee of one thread are treated as environment
transitions in the other thread, i.e. $\guar \subseteq \rely'$ and $\guar'
\subseteq \rely$. The rely of the composition is $\rely \cap \rely'$, since the
predicate $\pRG * \pRG'$ is guaranteed to be stable only under environment
transitions described by both $\rely$ and $\rely'$. The guarantee of the composition is $\guar \cup \guar'$, since other views need to take into account all state transitions either from $\guar$ or from $\guar'$.

\begin{figure}[t!]
$\vDash: (\LState \times \HState \times \Tokens) \times (\LState \times \HState \times \Tokens) \times \Int \times \Assn$\\[3pt]
\hfill $
\begin{array}{l@{\ \ }l}
\rgassert{(\lstate_l, \hstate_l, \tstate_l), (\lstate_s, \hstate_s, \tstate_s),
  \lint}{E \slto F},
  & \mbox{iff } \lstate_l = [\db{E}_\lint : \db{F}_\lint], \hstate_l =
    \emptymap, \mbox{ and } \tstate_l = \emptymap\\[1pt]
\rgassert{(\lstate_l, \hstate_l, \tstate_l), (\lstate_s, \hstate_s, \tstate_s), 
  \lint}{E \aslto F},
  & \mbox{iff } \lstate_l = \emptymap, 
    \hstate_l = [\db{E}_\lint : \db{F}_\lint], \mbox{ and }  \tstate_l =
    \emptymap \\[1pt]
\rgassert{(\lstate_l, \hstate_l, \tstate_l), (\lstate_s, \hstate_s, \tstate_s),
  \lint}{\ltodo{\tid}{\hpcom}},
  & \mbox{iff } \lstate_l = \emptymap, \hstate_l = \emptymap, \mbox{ and } 
    \tstate_l = [\tid : \todo{\hpcom}]\\[1pt]
\rgassert{(\lstate_l, \hstate_l, \tstate_l), (\lstate_s, \hstate_s, \tstate_s), 
  \lint}{\ldone{\tid}{\hpcom}},
  & \mbox{iff } \lstate_l = \emptymap, \hstate_l = \emptymap, \mbox{ and } 
    \tstate_l = [\tid : \done{\hpcom}]\\[1pt]
\rgassert{(\lstate_l, \hstate_l, \tstate_l), (\lstate_s, \hstate_s, \tstate_s),
  \lint}{\boxed{\pi}},
  & \mbox{iff } \lstate_l = \emptymap, \hstate_l = \emptymap, \tstate_l =
    \emptymap, \mbox{ and} \\ & \hfill
    \rgassert{(\lstate_s, \hstate_s, \tstate_s), (\emptymap, \emptymap,
    \emptymap), \lint}{\lPRG}\\[1pt]
\rgassert{(\lstate_l, \hstate_l, \tstate_l), (\lstate_s, \hstate_s, \tstate_s),
  \lint}{\pi * \pi'},
  & \mbox{iff there exist } \lstate'_l, \lstate''_l, \hstate'_l, \hstate''_l, \tstate'_l, \tstate''_l \mbox{ such that}\\ &
  \lstate_l = \lstate'_l \slop \lstate''_l, 
    \hstate_l = \hstate'_l \slop \hstate''_l,
    \tstate_l = \tstate'_l \uplus \tstate''_l, \\ &
    \rgassert{(\lstate'_l, \hstate'_l, \tstate'_l), (\lstate_s, \hstate_s, \tstate_s),
  \lint}{\pi}, \mbox{ and }\\ &
    \rgassert{(\lstate''_l, \hstate''_l, \tstate''_l), (\lstate_s, \hstate_s, \tstate_s),
  \lint}{\pi'}
\end{array}
$
\caption{Satisfaction relation for a fragment of the assertion language ${\sf VAssn}$\label{fig:RGsem4frm}}
\end{figure}

\mypar{The RGSep-based program logic}
We define the view assertion language $\VAssn$ that is a parameter of the proof
system. Each view assertion $\vassn$ takes form of a triple $(\lPRG,
\lRely, \lGuar)$, and the syntax for $\lPRG$ is:
\[
\begin{array}{c c l}
  E    & ::= & a \mid X \mid E+E \mid \dots, \quad \mbox{where } X \in \LVars, a \in \Vals \\
  \lPRG    & ::= & E = E \mid E \slto E \mid E \aslto E \mid \ltodo{\tid}{\hpcom}
              \mid \ldone{\tid}{\hpcom} \mid \boxed{\lPRG} \mid
                   \lPRG \vop \lPRG \mid \neg \lPRG \mid \dots \\
\end{array}
\]
Formula $\lPRG$ denotes a predicate of a view as defined by a satisfaction
relation $\models$ in Figure~\ref{fig:RGsem4frm}. There $E \slto E$ and $E
\aslto E$ denote a concrete and an abstract state describing singleton heaps. A
non-boxed formula $\lPRG$ denotes the view with the local state satisfying
$\lPRG$ and shared state unrestricted; $\boxed{\lPRG}$ denotes the view with the
empty local state and the shared state satisfying $\lPRG$; $\lPRG \vop \lPRG'$
the composition of predicates corresponding to $\lPRG$ and $\lPRG'$. The
semantics of the rest of connectives is standard. Additionally, for simplicity
of presentation of the syntax, we require that boxed assertions $\boxed{\lPRG}$
be not nested (as opposed to preventing that in the definition).

The other components $\lRely$ and $\lGuar$ of a view assertion are sets of {\em
rely/guarantee actions} $\lAct$ with the syntax: $\lAct ::=
\rgact{\lPRG}{\lPRG'}$. An action $\rgact{\lPRG}{\lPRG'}$ denotes a change of a
part of the shared state that satisfies $\lPRG$ into one that satisfies
$\lPRG'$, while leaving the rest of the shared state unchanged. We associate
with an action $\rgact{\lPRG}{\lPRG'}$ all state transitions from the following
set:
\begin{multline*}
\db{\rgact{\lPRG}{\lPRG'}} = 
\{ ((\lstate_s \slop \lstate''_s, \hstate_s \slop \hstate''_s, \tstate_s \uplus
\tstate''_s), 
(\lstate'_s \slop \lstate''_s, \hstate'_s \slop \hstate''_s, \tstate'_s \uplus
\tstate''_s))
\mid{} \\ \exists \lint \ldotp
\rgassert{(\emptymap, \emptymap, \emptymap), (\lstate_s, \hstate_s, \tstate_s), \lint}{\boxed{\lPRG}} \land
\rgassert{(\emptymap, \emptymap, \emptymap), (\lstate'_s, \hstate'_s, \tstate'_s), \lint}{\boxed{\lPRG'}} \}
\end{multline*}

We give semantics to view assertions with the function
$\evalf{\cdot}{\cdot}$ that is defined as follows:
\[
\evalf{(\lPRG, \lRely, \lGuar)}{\lint} \triangleq 
  (\dc{ (l, s) \mid 
    \rgassert{l, s, \lint}{\lPRG}}, \bigcup\nolimits_{\lAct \in \lRely} \db{\lAct}, \bigcup\nolimits_{\lAct \in \lGuar} \db{\lAct}).
\]


\newcommand{\shared}{{\sf global}}

\newcommand{\cnt}{{\tt k}}
\newcommand{\acnt}{{\tt K}}
\newcommand{\comb}{{\tt L}}

\newcommand{\mytid}{{\tt mytid}}
\newcommand{\tinv}{{\sf tinv}}

\newcommand{\lparam}{{\tt arg}}
\newcommand{\lres}{{\tt res}}

\newcommand{\minc}{{\sf inc}}
\newcommand{\tidp}{\tid'}

\newcommand{\nulltid}{0}

\newcommand{\addr}[1]{\&{#1}}

\newcommand{\tpre}[3]{{\sf todo}({#1}, {#2}, {#3})}
\newcommand{\tpost}[3]{{\sf done}({#1}, {#2}, {#3})}

\newcommand{\nil}{{\tt nil}}

\newcommand{\inv}{{\sf kinv}}

\newcommand{\taskt}{{\sf task_{todo}}}
\newcommand{\taskd}{{\sf task_{done}}}
\newcommand{\alltinv}{\vopall_{j \in \ThreadID}\,\tinv(j)}
\newcommand{\nslto}{\mathbin{\not\mapsto}}
\newcommand{\LI}{{\sf LI}}

\section{Example}


{

\renewcommand\floatpagefraction{.9}
\renewcommand\dblfloatpagefraction{.9} 
\renewcommand\topfraction{.9}
\renewcommand\dbltopfraction{.9} 
\renewcommand\bottomfraction{.9}
\renewcommand\textfraction{.1}   
\setcounter{totalnumber}{50}
\setcounter{topnumber}{50}
\setcounter{bottomnumber}{50}
\setlength{\textfloatsep}{5pt plus 1pt minus 1pt}
\setlength{\floatsep}{5pt plus 1pt minus 1pt}

\begin{figure}[t]
\begin{lstlisting}[basicstyle=\small\ttfamily]
int L = 0, #$\cnt$# = 0, #$\lparam$#[N], #$\lres$#[N]; \\ initially all #$\lres[i] \neq \nil$#

#$\llib(\minc, a, r)$#:
  #\lstassert{
    \shared * M(\tid) * \ltodo{\tid}{\hlib(\minc, a, r)}
    }#
  #$\lparam$#[mytid()] := #$a$#;
  #$\lres$#[mytid()] := #$\nil$#; $\label{line:task}$
  #\lstassert{
    \shared *
    \boxed{
      {\sf true} * (\taskt(\tid, a, r) \lor \taskd(\tid, a, r))
    }
    }#
  while (#$\lres$#[mytid()] = #$\nil$#): $\label{line:loop}$
    if (CAS(&L, #$\nulltid$#, mytid())):
      #\lstassert{
    \boxed{
      \addr{\comb} \slto \tid * \alltinv} *
    \boxed{
      {\sf true} * (\taskt(\tid, a, r) \lor \taskd(\tid, a, r))
    } * \inv(\_)
    }#
      for (i := 1; i #$\leq$# N; ++i):
        #\lstassert{
    \boxed{
      \addr{\comb} \slto \tid * \alltinv
    } * \inv(\_) * \LI(i, \tid, a, r)
    }#
        if (#$\lres$#[i] = #$\nil$#):
          #\lstassert{
    \exists V, A, R \sothat
    \inv(V) * \LI(i, \tid, a, r)
    *{} \\
    \boxed{
      \addr{\comb} \slto \tid * \alltinv
    }* \boxed{ 
    {\sf true} * \taskt(i, A, R)
    }
    }#
          #$\cnt$# := #$\cnt$# + #$\lparam$#[i];
          #\lstassert{
    \exists V, A, R \sothat
    \addr{\cnt} \slto V+A * \addr{\acnt} \aslto V * \LI(i, \tid, a, r)
    *{}\\
    \boxed{
      \addr{\comb} \slto \tid * \alltinv
    } * \boxed{ 
    {\sf true} * \taskt(i, A, R)
    }
    }#
          #$\lres$#[i]  := #$\cnt$#; $\label{line:lp}$
          #\lstassert{
    \exists V, A, R \sothat
    \inv(V+A) * \LI(i+1, \tid, a, r)
     *{}\\ \boxed{
      \addr{\comb} \slto \tid * \alltinv
    } * \boxed{ 
    {\sf true} * \taskd(i, A, R)
    }
    }#
      #\lstassert{
    \boxed{
      \addr{\comb} \slto \tid * \alltinv
    } * 
    \boxed{
      {\sf true} * \taskd(\tid, a, r)
      } * \inv(\_)
    }#
      L = #$\nulltid$#;
  assume(#$\lres$#[mytid()] = #$r$#);
  #\lstassert{
  \shared * M(\tid) * \ldone{\tid}{\hlib(\minc, a, r)}
  }#
\end{lstlisting}
\vspace{-1.2em}
\caption{Proof outline for a flat combiner of a concurrent increment.
Indentation is used for grouping commands.}
\label{fig:fcproof}
\end{figure}

}


In this section, we demonstrate how to reason about algorithms with helping
using relational views. We choose a simple library $\llib$ implementing a
concurrent increment and prove its linearizability with the RGSep-based logic.

The concrete library $\llib$ has one method $\minc$, which increments the value
of a shared counter $\cnt$ by the argument of the method. The specification of
$\llib$ is given by an abstract library $\hlib$. The abstract command, provided
by $\hlib$ as an implementation of $\minc$, operates with an abstract counter
$\acnt$ as follows (assuming that $\acnt$ is initialised by zero):
\begin{lstlisting}[basicstyle=\small\ttfamily]
#$\hlib(\minc, a, r)$#:   < __kabs := __kabs + #$a$#; assume(__kabs == #$r$#); >
\end{lstlisting}
That is, $\hlib(\minc, a, r)$ atomically increments a counter and a command
${\tt assume}(\acnt == r)$, which terminates only if the return value $r$ chosen
at the invocation equals to the resulting value of $\acnt$. This corresponds to
how we specify methods' return values in
\S\ref{sec:soundness}.

In Figure~\ref{fig:fcproof}, we show the pseudo-code of the implementation of a
method $\minc$ in a C-style language along with a proof outline. The method
$\llib(\minc, a, r)$ takes one argument, increments a shared counter $\cnt$ by
it and returns the increased value of the counter. Since $\cnt$ is shared among
threads, they follow a protocol regulating the access to the counter. This
protocol is based on flat combining \cite{flatcombining}, which is a
synchronisation technique enabling a parallel execution of sequential
operations.

The protocol is the following. When a thread $\tid$ executes $\llib(\minc, a,
r)$, it first makes the argument of the method visible to other threads by
storing it in an array $\lparam$, and lets $\lres[\tid] = \nil$ to signal to
other threads its intention to execute an increment with that argument. It then
spins in the loop on line~\ref{line:loop}, trying to write its thread identifier
into a variable $\comb$ with a compare-and-swap (CAS). Out of all threads
spinning in the loop, the one that succeeds in writing into $\comb$ becomes a
{\em combiner}: it performs the increments requested by all threads with
arguments stored in $\lparam$ and writes the results into corresponding cells of
the array $\lres$. The other threads keep spinning and periodically checking the
value of their cells in $\lres$ until a non-$\nil$ value appears in it, meaning
that a combiner has performed the operation requested and marked it as finished.
The protocol relies on the assumption that $\nil$ is a value that is never
returned by the method. Similarly to the specification of the increment method,
the implementation in Figure~\ref{fig:fcproof} ends with a command ${\tt
assume}({\tt res}[{\tt mytid}()] = r)$.

The proof outline features auxiliary assertions defined in
Figure~\ref{fig:preds}. In the assertions we let $\_$ denote a value or a
logical variable whose name is irrelevant. We assume that each program variable
${\tt var}$ has a unique location in the heap and denote it with $\& {\tt var}$.
Values $a$, $r$ and $\tid$ are used in the formulas and the code as constants.

We prove the following specification for $\llib(\minc, a, r)$:
\[
\rginfer{\tid}{\lRely_\tid, \lGuar_\tid}{
  \nlfrac{
    \shared * M(\tid) *
  }{
    \ltodo{\tid}{\hlib(\minc, a, r)}
  }
}{
  \llib(\minc, a, r)
}{
  \nlfrac{
    \shared * M(\tid) *
  }{
    \ldone{\tid}{\hlib(\minc, a, r)}
  }
}
\]
In the specification, $M(\tid)$ asserts the presence of $\lparam[\tid]$ and
$\lres[\tid]$ in the shared state, and $\shared$ is an assertion describing the
shared state of all the threads. Thus, the pre- and postcondition of the
specification differ only by the kind of token given to $\tid$.

\begin{figure}[t]
$
\begin{array}{l c l}
X \nslto Y & \triangleq & \exists Y' \sothat X \slto Y' * Y \neq Y'
\\[1pt]
M(\tid) & \triangleq & \boxed{
{\sf true} * (\addr{\lparam[\tid]} \slto \_ * \addr{\lres[\tid]} \nslto \nil)} \\[1pt]
\taskt(\tid, a, r) & \triangleq & \addr{\lparam[\tid]} \slto a * \addr{\lres[\tid]} \slto \nil * \ltodo{\tid}{\hlib(\minc, a, r)};
\\[1pt]
\taskd(\tid, a, r) & \triangleq & \addr{\lparam[\tid]} \slto a * \addr{\lres[\tid]} \slto r * r \not= \nil * \ldone{\tid}{\hlib(\minc, a, r)};
\\[1pt]
\inv(V) & \triangleq & \addr{\cnt} \slto V * \addr{\acnt} \aslto V
\\[1pt]
\LI(i, \tid, a, r) & \triangleq & \boxed{
\begin{array}[t]{l}
{\sf true} * ((\tid < i \land
  \taskd(\tid, a, r))\lor{} \\[1pt]
  (\tid \geq i \land (\taskt(\tid, a, r) \lor \taskd(\tid, a, r))))
\end{array}}
\\[1pt]
\tinv(i) & \triangleq & \addr{\lparam[i]} \slto \_ * \addr{\lres[i]} \aslto \_ \lor \taskt(i, \_, \_) \lor \taskd(i, \_, \_) \\[1pt]
\shared & \triangleq & \boxed{
    (\addr{\comb} \slto 0 * \inv(\_) \lor \addr{\comb} \nslto 0) * \alltinv},
\end{array}
$

\caption{Auxiliary predicates. $\vopall_{j \in \ThreadID}\,\tinv(j)$ denotes
$\tinv(1) \vop \tinv(2) \vop \cdot \vop \tinv(\NThreads)$}
\label{fig:preds}
\end{figure}
\FloatBarrier

The main idea of the proof is in allowing a thread $\tid$ to share the ownership
of its token $\ltodo{\tid}{\hlib(\minc, a, r)}$ with the other threads. This
enables two possibilities for $\tid$. Firstly, $\tid$ may become a combiner.
Then $\tid$ has a linearization point on line~\ref{line:lp} (when the loop index
$i$ equals to $\tid$). In this case $\tid$ also {\em helps} other
concurrent threads by performing their linearization points on
line~\ref{line:lp} (when $i \neq \tid$). The alternative possibility is that
some other thread becomes a combiner and does a linearization point of $\tid$.
Thus, the method has a non-fixed linearization point, as it may occur in the
code of a different thread.

We further explain how the tokens are transferred. On line~\ref{line:task} the
method performs the assignment \texttt{res[mytid()] := nil}, signalling to other
threads about a task this thread is performing. At this step, the method
transfers its token $\ltodo{\tid}{\hlib(\minc, a, r)}$ to the shared state, as
represented by the assertion $\boxed{{\sf true} * \taskt(\tid, a, r)}$. In order
to take into consideration other threads interfering with $\tid$ and possibly
helping it, here and further we stabilise the assertion by adding a disjunct
$\taskd(\tid, a, r)$.

If a thread $\tid$ gets help from other threads, then $\taskd(\tid, a, r)$
holds, which implies that $\lres[\tid] \neq \nil$ and $\tid$ cannot enter the
loop on line~\ref{line:loop}. Otherwise, if $\tid$ becomes a combiner, it
transfers $\inv(\_)$ from the shared state to the local state of $\tid$ to take
over the ownership of the counters $\cnt$ and $\acnt$ and thus ensure that the
access to the counter is governed by the mutual exclusion protocol. At each
iteration $i$ of the forall loop, \texttt{res[i] = nil} implies that $\taskt(i,
\_, \_)$ holds, meaning that there is a token of a thread $i$ in the shared
state. Consequently, on line~\ref{line:lp} a thread $\tid$ may use it to perform
a linearization point of $i$.

The actions defining the guarantee relation $\lGuar_\tid$ of a thread $\tidp$
are the following:
\begin{enumerate}
\item $\rgact{\addr{\lparam[\tid]} \slto \_ * \addr{\lres[\tid]} \nslto
\nil}{\addr{\lparam[\tid]} \slto a * \addr{\lres[\tid]} \nslto \nil}$;
\item $\rgact{\addr{\lparam[\tid]} \slto a * \addr{\lres[\tid]} \nslto
\nil}{\taskt(\tid, a, r)}$;
\item $\rgact{\addr{\comb} \slto 0 * \inv(\_)}{\addr{\comb} \slto
    \tid}$;
\item $\rgact{\addr{\comb} \slto \tid * \taskt(T, A, R)}{\addr{\comb} \slto
\tid * \taskd(T, A, R)}$
\item $\rgact{\addr{\comb} \slto \tid}{\addr{\comb} \slto 0 * \inv(\_)}$
\item $\rgact{\taskd(\tid, a, r)}{\addr{\lparam[\tid]} \slto a *
\addr{\lres[\tid]} \slto r}$
\end{enumerate}
Out of them, conditions 2 and 6 specify transfering the token of a thread
$\tid$ to and from the shared state, and condition 4 describes using the shared
token of a thread $T$. The rely relation of a thread $\tid$ is then defined as
the union of all actions from guarantee relations of other threads and an
additional action for each thread $\tid' \in \ThreadID \setminus \{ \tid \}$
allowing the client to prepare a thread $\tid'$ for a new method call by giving
it a new token: $\rgact{\ldone{\tid'}{\hlib(\minc, A,
R)}}{\ltodo{\tid'}{\hlib(\minc, A', R')}}$.

\section{Related Work}
\label{sec:related}

There has been a significant amount of research on methods for proving
linearizability. Due to space constraints, we do not attempt a comprehensive
survey here (see~\cite{DongolD14}) and only describe the most closely related
work.

The existing logics for linearizability that use linearization points differ in
the thread-modular reasoning method used and, hence, in the range of concurrent
algorithms that they can handle. Our goal in this paper was to propose a uniform
basis for designing such logics and to formalise the method they use for
reasoning about linearizability in a way independent of the particular
thread-modular reasoning method used. We have only shown instantiations of our
logic based on disjoint concurrent separation logic~\cite{csl} and
RGSep~\cite{rgsep}. However, we expect that our logic can also be instantiated
with more complex thread-modular reasoning methods, such as those based on
concurrent abstract predicates~\cite{CAP} or islands and
protocols~\cite{turon-popl13}.

Our notion of tokens is based on the idea of treating method specifications as
resources when proving atomicity, which has appeared in various guises in
several logics~\cite{rgsep,rgsim-lin,tada}. Our contribution is to formalise
this method of handling linearization points independently from the underlying
thread-modular reasoning method and to formulate the conditions for soundly
combining the two (Definition~\ref{def:axiom}, \S\ref{sec:logic}).

We have presented a logic that unifies the various logics based on linearization
points with helping. However, much work still remains as this reasoning method
cannot handle all algorithms. Some logics have introduced {\em speculative}
linearization points to increase their
applicability~\cite{turon-popl13,rgsim-lin}; our approach to helping is closely
related to this, and we hope could be extended to speculation. But there are
still examples beyond this form of reasoning: for instance there are no proofs
of the Herlihy-Wing queue~\cite{linearizability} using linearization points
(with helping and/or speculation). This algorithm can be shown linearizable
using forwards/backwards simulation~\cite{linearizability} and more recently has
been shown to only require a backwards simulation~\cite{schellhorn-cav12}. But
integrating this form of simulation with the more intrincate notions of
interference expressible in the Views framework remains an open problem.

Another approach to proving linearizability is the aspect-oriented method. This
gives a series of properties of a queue~\cite{henzinger-concur13} (or a
stack~\cite{dodds-popl15}) implementation which imply that the implementation is
linearizable. This method been applied to algorithms that cannot be handled with
standard linearization-point-based methods. However, the aspect-oriented
approach requires a custom theorem per data structure, which limits its
applicability.

In this paper we concentrated on linearizability in its original
form~\cite{linearizability}, which considers only finite computations and,
hence, specifies only safety properties of the library. Linearizability has
since been generalised to also specify liveness
properties~\cite{gotsman-icalp11}. Another direction of future work is to
generalise our logic to handle liveness, possibly building on ideas
from~\cite{liang-lics14}.

When a library is linearizable, one can use its atomic specification instead of
the actual implementation to reason about its clients~\cite{filipovic-tcs}. Some
logics achieve the same effect without using linearizability, by expressing
library specifications as judgements in the logic rather than as the code of an
abstract library~\cite{iris,icap,sergey-esop15}. It is an interesting direction
of future work to determine a precise relationship between this method of
specification and linearizability, and to propose a generic logic unifying the
two.

\section{Conclusion}

We have presented a logic for proving the linearizability of concurrent
libraries that can be instantiated with different methods for thread-modular
reasoning. To this end, we have extended the Views framework~\cite{views} to
reason about relations between programs. Our main technical contribution in
this regard was to propose the requirement for axiom soundness
(Definition~\ref{def:axiom}, \S\ref{sec:logic}) that ensures a correct
interaction between the treatment of linearization points and the underlying
thread-modular reasoning.  We have shown that our logic is powerful enough to
handle concurrent algorithms with challenging features, such as helping. More
generally, our work marks the first step towards unifying the logics for
proving relational properties of concurrent programs.


\bibliographystyle{abbrv}
\bibliography{main}

\iflong
\clearpage
\appendix

\newcommand{\uminus}{\setminus}

\newcommand{\tpinv}{{\sf inv}}
\newcommand{\tpinvall}[5]{(\forall \tidk \sothat \tpinv_\tidk(#1, #2, #3, #4, #5))}
\newcommand{\mip}{{\sf MiP}}
\newcommand{\vP}[2]{p}
\newcommand{\vQ}[2]{q}
\newcommand{\vV}[1]{v_{#1}}

\newcommand{\safelib}{{\sf safelib}}
\newcommand{\gfp}[1]{{\sf gfp}\,{#1}}
\newcommand{\fr}[2]{{\sf fr}_{#2}({#1})}
\newcommand{\fun}[1]{\phi(#1)}
\newcommand{\funx}[1]{\psi(#1)}
\newcommand{\fbin}[2]{\phi(#1, #2)}

\newcommand{\sfsafe}{{\sf safe}_\tid}

\newcommand{\tidk}{k}
\newcommand{\ag}[1]{}

\section{Additional details of the RGSep-based logic}

The unit $\unit_\RGsep$ does not restrict states and the allowed state
transitions of the environment, while disallowing any action in the current
thread:
\[
\unit_\RGsep = (
\{ (\emptymap, \emptymap, \emptymap) \} \times (\LState \times \HState \times \Tokens), (\LState \times \HState \times \Tokens)^2, \emptyset
)
\]

Since action judgements are essential for reasoning about primitive commands in
our logic, we further refine conditions under which it holds of views from the
RGSep-based view monoid.
\begin{prop}\label{prop:logic-rgaxioms}
  The action judgement
  $\sat{\tid}{\lpcom}{(\pRG, \rely, \guar)}{(\qRG, \rely, \guar)}$ holds, if it
  is true that:
\begin{itemize}
\item
$\begin{multlined}[t]
\forall \lstate_l, \lstate_s, \lstate'_l, \lstate'_s, \hstate_l, \hstate_s,
\tstate_l, \tstate_s \ldotp
((\lstate_l,\hstate_l,\tstate_l), (\lstate_s,\hstate_s,\tstate_s)) \in \pRG
\land{} \\
\lstate'_l \slop \lstate'_s \in \intp{\tid}{\lpcom}{\lstate_l \slop \lstate_s} 
\implies
\exists \hstate'_l, \hstate'_s, \tstate'_l, \tstate'_s \ldotp
((\lstate'_l,\hstate'_l,\tstate'_l), (\lstate'_s,\hstate'_s,\tstate'_s))\in\qRG
\land{}\\
((\lstate_s,\hstate_s,\tstate_s),(\lstate'_s,\hstate'_s,\tstate'_s)) \in \guar
\land \lptrans{
  \hstate_l \slop \hstate_s, \tstate_l \uplus \tstate_s
}{
  \hstate'_l \slop \hstate'_s, \tstate'_l \uplus \tstate'_s
};
\end{multlined}$
\item
$\intp{\tid}{\lpcom}{\lstate} \neq \fault \implies
    \forall \lstate' \ldotp
    \intp{\tid}{\lpcom}{\lstate \slop \lstate'} = \{
      \lstate'' \slop \lstate' \mid \lstate'' \in \intp{\tid}{\lpcom}{\lstate}
    \}$.
\end{itemize}
\end{prop}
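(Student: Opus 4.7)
My plan is to unfold Definition~\ref{def:axiom} against an arbitrary frame view $r$ and then apply the two hypotheses to the $\pRG$-summand of the resulting composition. If $r = \bot$, or if $r = (\pRG_r, \rely_r, \guar_r)$ but $\guar \not\subseteq \rely_r$ or $\guar_r \not\subseteq \rely$, then $\reif{(\pRG, \rely, \guar) \vop r} = \emptyset$ and the implication is vacuous. Otherwise the composition equals $(\pRG \vop \pRG_r,\, \rely \cap \rely_r,\, \guar \cup \guar_r)$, and unfolding its reification together with the definition of $\vop$ on predicates yields a disjoint split $\lstate = \lstate_{l,1} \slop \lstate_{l,2} \slop \lstate_s$, $\hstate = \hstate_{l,1} \slop \hstate_{l,2} \slop \hstate_s$, $\tstate = \tstate_{l,1} \uplus \tstate_{l,2} \uplus \tstate_s$ with $((\lstate_{l,1}, \hstate_{l,1}, \tstate_{l,1}), (\lstate_s, \hstate_s, \tstate_s)) \in \pRG$ and the analogous membership for $\pRG_r$.

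I then peel off the $\pRG_r$-frame at the concrete level. Given $\lstate' \in \intp{\tid}{\lpcom}{\lstate}$, safety monotonicity (standard for the heap-manipulating commands of Section~\ref{sec:rgsep}) gives $\intp{\tid}{\lpcom}{\lstate_{l,1} \slop \lstate_s} \neq \fault$, so the second bullet of the proposition yields $\lstate' = \lstate'' \slop \lstate_{l,2}$ for some $\lstate'' \in \intp{\tid}{\lpcom}{\lstate_{l,1} \slop \lstate_s}$. Choosing any decomposition $\lstate'' = \lstate''_1 \slop \lstate''_2$, the first bullet then supplies $\hstate'_l, \hstate'_s, \tstate'_l, \tstate'_s$ such that $((\lstate''_1, \hstate'_l, \tstate'_l), (\lstate''_2, \hstate'_s, \tstate'_s)) \in \qRG$, $((\lstate_s, \hstate_s, \tstate_s), (\lstate''_2, \hstate'_s, \tstate'_s)) \in \guar \subseteq \guar \cup \guar_r$, and an LP chain on the $\pRG$-small abstract configuration. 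Because $\pRG_r$ is stable under $\rely_r \supseteq \guar$, its summand survives the shared update, so the tuple $((\lstate''_1 \slop \lstate_{l,2},\, \hstate'_l \slop \hstate_{l,2},\, \tstate'_l \uplus \tstate_{l,2}), (\lstate''_2, \hstate'_s, \tstate'_s))$ belongs to $\qRG \vop \pRG_r$, and reifying it witnesses $(\lstate',\, \hstate'_l \slop \hstate_{l,2} \slop \hstate'_s,\, \tstate'_l \uplus \tstate_{l,2} \uplus \tstate'_s) \in \reif{(\qRG, \rely, \guar) \vop r}$.

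The remaining and most delicate step is lifting the LP chain from $(\hstate_{l,1} \slop \hstate_s,\, \tstate_{l,1} \uplus \tstate_s)$ to one on the full $(\hstate, \tstate)$---that is, adding the frame $(\hstate_{l,2}, \tstate_{l,2})$ to every intermediate configuration. I would handle this by induction on the length of the chain, relying on two ingredients that are standard but not stated explicitly in the proposition: (i)~the frame property of the abstract primitive commands $\hpcom$, inherited from the usual heap-manipulating semantics adopted in Section~\ref{sec:rgsep}, which lets each abstract-state update extend along $\slop$ by $\hstate_{l,2}$; and (ii)~disjointness of $\tstate_{l,2}$ from every thread identifier retagged along the chain, which holds because each LP step only rewrites a $\todo{\hpcom}$-token present in the small configuration and the $\uplus$-decomposition is disjoint from $\tstate_{l,2}$. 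Under these, every single step of the chain lifts, and composing the lifts yields the required $\lptrans{\hstate, \tstate}{\hstate', \tstate'}$, completing the proof.
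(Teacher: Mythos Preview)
The paper does not spell out a proof of this proposition; it only records the one key observation, namely that any frame view $r=(\pRG_r,\rely_r,\guar_r)$ composable with $(\pRG,\rely,\guar)$ must satisfy $\guar\subseteq\rely_r$, so that stability of $\pRG_r$ under $\rely_r$ automatically yields stability under the $\guar$-move produced by the first bullet. Your argument is exactly this, carried out with the bookkeeping the paper leaves implicit: the case split on whether the composition is $\bot$, the decomposition of the reified state along $\pRG\vop\pRG_r$, the use of the second bullet to strip the frame's local heap $\lstate_{l,2}$, and the lifting of the $\mathsf{LP}^*$ chain via an abstract-level frame property together with disjointness of $\tstate_{l,2}$ from the tokens touched along the chain. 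So your approach is the paper's approach, made explicit; the two extra ingredients you flag (frame for abstract primitive commands, and token-disjointness for chain lifting) are precisely the points the paper's one-paragraph explanation suppresses.

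One small correction. The step where you derive $\intp{\tid}{\lpcom}{\lstate_{l,1}\slop\lstate_s}\neq\fault$ from the existence of a non-fault outcome on the larger heap is not safety monotonicity---that property runs the other way (no fault on the small heap implies no fault on the large one). What actually gives you non-faulting on $\lstate_{l,1}\slop\lstate_s$ is the first bullet itself under its intended separation-logic reading: the premise on $\pRG$ is meant to guarantee that $\lpcom$ does not fault on any state reified from $\pRG$, since otherwise even the action judgement with the unit frame would fail. Once you have non-faulting on the small heap, genuine safety monotonicity then gives non-faulting on the full heap (handling the implicit non-faulting requirement of Definition~\ref{def:axiom}), and the second bullet becomes applicable as you use it. The structure of your proof is unaffected; only the justification of that one step needs to be reworded.
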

The requirement to primitive commands in Proposition~\ref{prop:logic-rgaxioms}
is similar to that of the action judgements. The difference is that in the
RG-based proof system it is not necessary to require $\lpcom$ to preserve any
view $r$ of the environment: since a predicate $\pRG_r$ of any view $(\pRG_r,
\rely_r, \guar_r)$ in another thread is stable under $\rely_r$, it is also
stable under $\guar \subseteq \rely_r$ whenever $(\pRG, \rely, \guar) \vop
(\pRG_r, \rely_r, \guar_r)$ is defined. Consequently, views of the environment
are never invalidated by local transitions. Using the premise of
Proposition~\ref{prop:logic-rgaxioms} in {\sc Prim} rule makes it closer to the
standard proof rule for the atomic step in Rely/Guarantee.

\section{Compositionality properties of the safety relation}

In this section, we formulate and prove compositionality properties of the
safety relation.

For further reference we restate the definition of a repartitioning
implication:
\begin{equation}\label{def:rimpl}
p \rimpl q \triangleq \forall r \sothat \reif{p * r} \subseteq \reif{q * r}.
\end{equation}

\begin{lem}\label{lem:safe}
The safety relation $\sfsafe$ has the following closure properties:
\begin{itemize}
\item {\sc Frame}: $\forall \tid, \lcom, p, q, r \sothat 
  \safe{\tid}{p}{\lcom}{q} \implies
  \safe{\tid}{p * r}{\lcom}{q * r}$;
\item {\sc Choice}: $\forall \tid, \lcom_1, \lcom_2, p, q \sothat 
  \safe{\tid}{p}{\lcom_1}{q} \land \safe{\tid}{p}{\lcom_2}{q} \implies
  \safe{\tid}{p}{\lcom_1 \ldet \lcom_2}{q}$;
\item {\sc Iter}: $\forall \tid, \lcom, p \sothat 
  \safe{\tid}{p}{\lcom}{p} \implies \safe{\tid}{p}{\iter{\lcom}}{p}$;
\item {\sc Seq}: $\forall \tid, \lcom_1, \lcom_2, p, p', q \sothat 
  \safe{\tid}{p}{\lcom_1}{p'} \land \safe{\tid}{p'}{\lcom_2}{q} \implies
  \safe{\tid}{p}{\lcom_1 \lseq \lcom_2}{q}$;
\item {\sc Conseq}: $\forall \tid, \lcom, p, p', q, q' \sothat p' \rimpl p
  \land \safe{\tid}{p}{\lcom}{q} \land q \rimpl q' \implies
  \safe{\tid}{p'}{\lcom}{q'}$;
\item {\sc Disj}: $\begin{multlined}[t]
  \forall \tid, \lcom, p_1, p_2, q_1, q_2 \sothat 
  \safe{\tid}{p_1}{\lcom}{q_1} \land \safe{\tid}{p_2}{\lcom}{q_2} \implies{}\\
  \safe{\tid}{p_1 \lor p_2}{\lcom}{q_1 \lor q_2}.\end{multlined}$
\end{itemize}
\end{lem}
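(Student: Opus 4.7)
The proof proceeds by coinduction on each case, exploiting the fact that $\sfsafe$ is defined as the greatest relation satisfying the two clauses of Definition~\ref{def:safety}. For every property I would exhibit a candidate relation $S$ and verify that $S$ is a post-fixed point of the monotone functional whose greatest fixed point is $\sfsafe$, which yields $S \subseteq \sfsafe$.

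For \textsc{Frame}, take $S = \{(p \vop r, \lcom, q \vop r) \mid \safe{\tid}{p}{\lcom}{q}\}$. The closure check uses two facts: the action judgment is preserved under framing (noted after Definition~\ref{def:axiom}), and repartitioning implication composes, i.e., $p \rimpl q$ implies $p \vop r \rimpl q \vop r$, which is immediate from associativity and commutativity of $\vop$. For \textsc{Seq}, take $S = \{(p, \lcom_1 \lseq \lcom_2, q) \mid \exists p' \sothat \safe{\tid}{p}{\lcom_1}{p'} \land \safe{\tid}{p'}{\lcom_2}{q}\} \cup \sfsafe$; when $\lcom_1 \neq \cskip$, stepping $\lcom_1$ forward keeps the resulting triple in $S$, and when $\lcom_1 = \cskip$ the transition $\cskip \lseq \lcom_2 \to \lcom_2$ uses $\id$, so the obligation $\sat{\tid}{\id}{p}{p'}$ is discharged from $p \rimpl p'$ (which holds because $\sfsafe(p, \cskip, p')$ forces it) together with reflexivity of ${\sf LP}^*$. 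Once \textsc{Seq} is available, \textsc{Iter} follows from $S = \{(p, \iter{\lcom}, p) \mid \safe{\tid}{p}{\lcom}{p}\}$: each $\iter{\lcom}$-transition uses $\id$ and leads to either $\cskip$ (where $p \rimpl p$ is trivial) or $\lcom \lseq \iter{\lcom}$, and for the latter the already-established \textsc{Seq} composes $\sfsafe(p,\lcom,p)$ with the membership of $(p,\iter{\lcom},p)$ in $S$. \textsc{Choice} is the simplest: both transitions use $\id$, so I take $p' = p$ and invoke the hypothesised safety judgment for the selected branch.

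For \textsc{Conseq}, use $S = \{(p', \lcom, q') \mid \exists p, q \sothat p' \rimpl p \land \safe{\tid}{p}{\lcom}{q} \land q \rimpl q'\}$. The key observation is that from $\sat{\tid}{\lpcom}{p}{p''}$ and $p' \rimpl p$ one derives $\sat{\tid}{\lpcom}{p'}{p''}$ by unfolding Definition~\ref{def:axiom} and using $\reif{p' \vop r} \subseteq \reif{p \vop r}$; the continuation $(p'', \lcom', q') \in S$ is witnessed by the trivial $p'' \rimpl p''$ together with $\sfsafe(p'',\lcom',q)$ and $q \rimpl q'$. For \textsc{Disj}, define $S = \{(p_1 \lor p_2, \lcom, q_1 \lor q_2) \mid \safe{\tid}{p_1}{\lcom}{q_1} \land \safe{\tid}{p_2}{\lcom}{q_2}\}$. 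For every transition $\lcom \to \lcom'$ via $\lpcom$, obtain $p_1'$ and $p_2'$ from the two hypotheses and establish the combined action judgment $\sat{\tid}{\lpcom}{p_1 \lor p_2}{p_1' \lor p_2'}$; in the $\cskip$ clause the obligation $(p_1 \lor p_2) \rimpl (q_1 \lor q_2)$ follows from $p_i \rimpl q_i$ by pointwise inclusion. Both arguments rely crucially on the equations~(\ref{eq:disj}), which let me case-split on whether a state lies in $\reif{p_1 \vop r}$ or $\reif{p_2 \vop r}$ and then repackage the chosen witness into $\reif{(p_1' \lor p_2') \vop r}$.

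I expect \textsc{Disj} to be the main obstacle, since the action judgments for the two disjuncts may witness different ${\sf LP}^*$ sequences on the abstract component, and these witnesses must be merged into a single existential at the level of the disjunction; the case-split must be performed per concrete state rather than at the view level, and the distributivity of $\lor$ over $\vop$ in~(\ref{eq:disj}) is precisely the algebraic fact that makes this merger possible. A secondary subtlety appears in \textsc{Seq}, \textsc{Iter} and \textsc{Choice}, where the obligations on $\id$-transitions depend on ${\sf LP}^*$ including the identity; I read this into Definition~\ref{def:axiom}, as otherwise no concrete step that is not itself a linearization point could be justified by an action judgment.
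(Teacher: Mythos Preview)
Your overall strategy---coinduction via post-fixed points of the functional $F_\tid$ whose greatest fixed point is $\sfsafe$---matches the paper's, and your treatments of \textsc{Frame}, \textsc{Seq}, and \textsc{Disj} are essentially the paper's proofs. Your \textsc{Conseq} argument is slightly cleaner than the paper's: there the precondition-weakening is handled by a direct one-step argument and the postcondition-weakening by a separate coinduction, whereas your single candidate absorbs both at once.

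There is, however, a genuine gap in your \textsc{Iter} argument. With $S = \{(p, \iter{\lcom}, p) \mid \safe{\tid}{p}{\lcom}{p}\}$, the transition $\trans{\iter{\lcom}}{\id}{\lcom \lseq \iter{\lcom}}$ obliges you to produce some $p'$ with $(p', \lcom \lseq \iter{\lcom}, p) \in S$, and no such triple is in $S$ since its members all have middle component $\iter{\lcom}$. Your appeal to the already-established \textsc{Seq} does not close this: \textsc{Seq} concludes membership in $\sfsafe$ only from two premises that are already in $\sfsafe$, but here the second premise $(p, \iter{\lcom}, p)$ is only known to lie in $S$, not in $\sfsafe$---that is exactly what you are in the middle of proving. (The same defect, more mildly, affects \textsc{Choice}: the residual $(p, \lcom_i, q)$ lies in $\sfsafe$ but not in your stated $S$; there the fix is simply $S \cup \sfsafe$, as you already did for \textsc{Seq}.) For \textsc{Iter} even $S \cup \sfsafe$ is not enough. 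The paper repairs this by enlarging the coinductive candidate to also contain all triples of the form $(p_1, \lcom_1 \lseq \iter{\lcom}, p)$ with $(p_1, \lcom_1, p) \in \sfsafe$ and $(p, \lcom, p) \in \sfsafe$, together with the trivial $(p, \cskip, p)$; the step of $F_\tid$ then closes within this larger set. Equivalently you could formulate and justify an ``up-to-\textsc{Seq}'' coinduction principle, but that itself requires a proof and is not something you get for free merely from having established \textsc{Seq} for $\sfsafe$.
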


We prove all of the properties by coinduction. To this end, we take
the fixed-point 
definition of $\sfsafe$. We consider 
\[
F_\tid : \powerset{\Views \times \LCom \times \Views} \to
                                  \powerset{\Views \times \LCom \times \Views}
\]
defined as follows:
\[
\begin{array}{rcl}
F_\tid(X) & \triangleq &
\dc{(p, \lcom, q) \mid 
  \forall \lcom', \lpcom \sothat \trans{\lcom}{\lpcom}{\lcom'} \implies 
      \exists p' \sothat \sat{\tid}{\lpcom}{p}{p'} 
          \land (p', \lcom', q) \in X} \\
& & \hfill {} \cup \dc{(p, \cskip, q) \mid p \rimpl q}
\end{array}
\]
Note that a powerset domain ordered by inclusion is a complete lattice and $F$
is a mapping on it, which means that $F$ is monotone. Consequently, by
Knaster-Tarski fixed-point theorem $F$ has the greatest fixed-point. It is easy
to see that $\sfsafe \triangleq \gfp{F_\tid}$ in Definition~\ref{def:safety}.

In the proof of Lemma~\ref{lem:safe} we use the following properties of the
action judgement and the $\rimpl$ relation.

\begin{prop}[Locality]\label{prop:axiomlocal}
\[
\begin{array}{l}
\forall p, q, r \sothat p \rimpl q \implies p * r \rimpl q * r;\\
\forall \tid, \lpcom, p, q, r \sothat \sat{\tid}{\lpcom}{p}{q} \implies
\sat{\tid}{\lpcom}{p * r}{q * r}.
\end{array}
\]
\end{prop}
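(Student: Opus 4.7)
The plan is to derive both claims from associativity of the monoid operation $\vop$ together with the observation that both $\rimpl$ and the action judgement are defined by quantifying universally over an arbitrary ``frame'' view; extending a precondition by $r$ can therefore always be compensated by folding $r$ into the frame supplied to the hypothesis.

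For the first conjunct, I would unfold $p \vop r \rimpl q \vop r$ into the requirement that for every view $s$, $\reif{(p \vop r) \vop s} \subseteq \reif{(q \vop r) \vop s}$. Applying associativity of $\vop$ rewrites both sides as $\reif{p \vop (r \vop s)}$ and $\reif{q \vop (r \vop s)}$ respectively, and the desired inclusion follows by instantiating the hypothesis $p \rimpl q$ with the frame $r \vop s$.

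For the second conjunct, I would fix an arbitrary frame $s$ and states $\lstate, \lstate', \hstate, \tstate$ satisfying $(\lstate, \hstate, \tstate) \in \reif{(p \vop r) \vop s}$ and $\lstate' \in \intp{\tid}{\lpcom}{\lstate}$. By associativity, the first conjunct is equivalent to $(\lstate, \hstate, \tstate) \in \reif{p \vop (r \vop s)}$. Instantiating the hypothesis $\sat{\tid}{\lpcom}{p}{q}$ with the frame $r \vop s$ on the same $\lstate, \lstate', \hstate, \tstate$ then yields $\hstate'$ and $\tstate'$ with $\lptrans{\hstate, \tstate}{\hstate', \tstate'}$ and $(\lstate', \hstate', \tstate') \in \reif{q \vop (r \vop s)}$, which by a further use of associativity coincides with $\reif{(q \vop r) \vop s}$, as required.

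Neither part poses a real obstacle: both statements are essentially built into the definitions via the outer universal quantifier over frames, and the only nontrivial ingredient is the monoid law. The one subtlety worth stating explicitly is that the ${\sf LP}^*$ witness produced by the premise acts on the full $\hstate, \tstate$ arising from reification and is entirely insensitive to how that state is partitioned between $p$, $r$ and the additional frame $s$, so the same witness transports to the conclusion without modification.
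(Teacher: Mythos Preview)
Your proposal is correct and matches the paper's approach: the paper simply states that both properties ``can be easily checked after unfolding definitions of action judgements,'' and you have spelled out exactly that unfolding, with associativity of $\vop$ as the only ingredient needed to absorb the extra frame $r$ into the universally quantified frame already present in the definitions.
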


\begin{prop}[Consequence]\label{prop:axiomconseq}
\[
\forall \tid, \lpcom, p, q, p', q' \sothat p' \rimpl p \land \sat{\tid}{\lpcom}{p}{q} \land 
q \rimpl q' \implies \sat{\tid}{\lpcom}{p'}{q'}.
\]
\end{prop}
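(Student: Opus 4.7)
The plan is to unfold the definitions of the action judgement $\sat{\tid}{\lpcom}{\cdot}{\cdot}$ (Definition~\ref{def:axiom}) and of the repartitioning implication $\rimpl$ (equation~(\ref{def:rimpl})), and to exploit the fact that $\rimpl$ itself quantifies universally over all frame views. This makes the two hypotheses $p'\rimpl p$ and $q\rimpl q'$ instantiable with exactly the same frame $r$ that appears in the quantifier inside the action judgement for $p'$ and $q'$.

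Concretely, I would fix arbitrary $r,\lstate,\lstate',\hstate,\tstate$ with $(\lstate,\hstate,\tstate)\in\reif{p'\vop r}$ and $\lstate'\in\intp{\tid}{\lpcom}{\lstate}$, and produce the witnesses $\hstate',\tstate'$ required by $\sat{\tid}{\lpcom}{p'}{q'}$ in three elementary steps. First, apply $p'\rimpl p$ at frame $r$ to upgrade the starting triple to $(\lstate,\hstate,\tstate)\in\reif{p\vop r}$. Second, feed the same $r,\lstate,\lstate',\hstate,\tstate$ into the hypothesis $\sat{\tid}{\lpcom}{p}{q}$ to obtain $\hstate',\tstate'$ satisfying $\lptrans{\hstate,\tstate}{\hstate',\tstate'}$ and $(\lstate',\hstate',\tstate')\in\reif{q\vop r}$. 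Third, apply $q\rimpl q'$ at the same frame $r$ to lift the postcondition membership to $(\lstate',\hstate',\tstate')\in\reif{q'\vop r}$, which together with the already-produced ${\sf LP}^*$ step is exactly the conclusion demanded by $\sat{\tid}{\lpcom}{p'}{q'}$.

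There is no substantive obstacle: the argument is pure bookkeeping of universal quantifiers and never needs to look inside $\lpcom$, $\reif{\cdot}$, or the ${\sf LP}^*$ relation. The only conceptual point worth flagging is that the proof crucially relies on $\rimpl$ being defined with an outer quantification over all frames $r$ rather than as pointwise inclusion $\reif{p'}\subseteq\reif{p}$: the action judgement chooses the frame $r$ adversarially, so the weakening steps must survive composition with that same $r$. Framing $\rimpl$ in this way — which is also what makes the companion locality property (Proposition~\ref{prop:axiomlocal}) work smoothly — reduces the consequence rule to a one-line unfolding.
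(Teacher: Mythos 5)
Your proof is correct and is precisely the argument the paper has in mind: the paper dismisses Propositions~\ref{prop:axiomlocal} and~\ref{prop:axiomconseq} with the remark that they ``can be easily checked after unfolding definitions of action judgements,'' and your three-step chain --- instantiate $p'\rimpl p$ at the adversarial frame $r$, invoke $\sat{\tid}{\lpcom}{p}{q}$ at that same $r$, then instantiate $q\rimpl q'$ at $r$ --- is exactly that unfolding, including the correct observation that the frame-quantified definition of $\rimpl$ is what makes the instantiation at the chosen $r$ legitimate.
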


The proofs of Propositions~\ref{prop:axiomlocal} and \ref{prop:axiomconseq} are straighforward:
both properties can be easily checked after unfolding definitions of action judgements.

\begin{prop}[Distributivity]\label{prop:axiomdist}
\[
\forall \tid, \lpcom, p_1, p_2, q_1, q_2 \sothat 
\sat{\tid}{\lpcom}{p_1}{q_1} \land \sat{\tid}{\lpcom}{p_2}{q_2} \implies
\sat{\tid}{\lpcom}{p_1 \lor p_2}{q_1 \lor q_2}.
\]
\end{prop}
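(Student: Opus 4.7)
The plan is to unfold the definition of the action judgement and reduce $\sat{\tid}{\lpcom}{p_1 \lor p_2}{q_1 \lor q_2}$ to the two given action judgements by a simple case split, using the algebraic properties of disjunction on views given by equation~(\ref{eq:disj}).

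More concretely, I would fix arbitrary $r, \lstate, \lstate', \hstate, \tstate$ such that $(\lstate, \hstate, \tstate) \in \reif{(p_1 \lor p_2) \vop r}$ and $\lstate' \in \intp{\tid}{\lpcom}{\lstate}$, and aim to produce witnesses $\hstate', \tstate'$ with $\lptrans{\hstate, \tstate}{\hstate', \tstate'}$ and $(\lstate', \hstate', \tstate') \in \reif{(q_1 \lor q_2) \vop r}$. Using the right-hand equation of~(\ref{eq:disj}), we have $(p_1 \lor p_2) \vop r = (p_1 \vop r) \lor (p_2 \vop r)$, and then the left-hand equation gives $\reif{(p_1 \vop r) \lor (p_2 \vop r)} = \reif{p_1 \vop r} \cup \reif{p_2 \vop r}$. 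Hence $(\lstate, \hstate, \tstate)$ lies in $\reif{p_i \vop r}$ for some $i \in \{1,2\}$.

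Applying the hypothesis $\sat{\tid}{\lpcom}{p_i}{q_i}$ to this $r$ and this triple yields $\hstate', \tstate'$ with $\lptrans{\hstate, \tstate}{\hstate', \tstate'}$ and $(\lstate', \hstate', \tstate') \in \reif{q_i \vop r}$. Applying~(\ref{eq:disj}) again in the other direction, $\reif{q_i \vop r} \subseteq \reif{q_1 \vop r} \cup \reif{q_2 \vop r} = \reif{(q_1 \vop r) \lor (q_2 \vop r)} = \reif{(q_1 \lor q_2) \vop r}$, so the same witnesses discharge the required postcondition.

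There is no real obstacle here: the argument is essentially just pushing $\lor$ past $\vop$ and past $\reif{\cdot}$ using~(\ref{eq:disj}) and then case-splitting on which disjunct holds. The only point worth being careful about is that one must apply both properties in~(\ref{eq:disj})—the distributivity of $\vop$ over $\lor$ to rearrange the precondition side, and the interpretation of $\lor$ as set union under $\reif{\cdot}$ to perform the case split and to reassemble the postcondition side.
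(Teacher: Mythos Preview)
Your proposal is correct and follows essentially the same approach as the paper's own proof: unfold Definition~\ref{def:axiom}, use the two equalities in~(\ref{eq:disj}) to rewrite $\reif{(p_1 \lor p_2)\vop r}$ as $\reif{p_1\vop r}\cup\reif{p_2\vop r}$, case-split on which disjunct contains the triple, apply the corresponding hypothesis, and then use~(\ref{eq:disj}) again to embed $\reif{q_i\vop r}$ into $\reif{(q_1\lor q_2)\vop r}$. The paper handles the case $i=1$ explicitly and notes the other is analogous, whereas you parameterise by $i$; otherwise the arguments coincide.
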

\begin{proof}
  According to the Definition~\ref{def:axiom} of the action judgement $\sat{\tid}{\lpcom}{p_1 \lor
  p_2}{q_1 \lor q_2}$, in order to prove the latter we need to demonstrate the following:
\begin{multline}\label{eq:axiomdist1}
\forall r, \lstate, \lstate', \hstate, \tstate \sothat 
\lstate' \in \intp{\tid}{\lpcom}{\lstate} \land 
(\lstate, \hstate, \tstate) \in \reif{(p_1 \lor p_2) \vop r} \implies{} \\ 
\exists \hstate', \tstate' \sothat 
\lptrans{\hstate, \tstate}{\hstate', \tstate'} 
\land (\lstate', \hstate', \tstate') \in \reif{(q_1 \lor q_2) \vop r}.
\end{multline}
Let us consider any view $r$, states $\lstate, \lstate', \hstate$ and tokens
$\tstate$ such that both $\lstate' \in \intp{\tid}{\lpcom}{\lstate}$ and
$(\lstate, \hstate, \tstate) \in \reif{(p_1 \lor p_2) \vop r}$ hold. According
to the properties of disjunction stated in equalities (\ref{eq:disj}),
\[
\reif{(p_1 \lor p_2) \vop r} = \reif{(p_1 \vop r) \lor (p_2 \vop r)} =
  \reif{p_1 \vop r} \cup \reif{p_2 \vop r}.
\] 
Consequently, $(\lstate, \hstate, \tstate) \in \reif{p_1 \vop r} \cup \reif{p_2
\vop r}$.

Let us assume that $(\lstate, \hstate, \tstate) \in \reif{p_1 \vop r}$ (the
other case is analogous). Then according to the action judgement
$\sat{\tid}{\lpcom}{p_1}{q_1}$, there exist $\hstate'$ and $\tstate'$
such that:
\begin{equation}\label{eq:axiomdist2}
\lptrans{\hstate, \tstate}{\hstate', \tstate'} 
\land (\lstate', \hstate', \tstate') \in \reif{q_1 \vop r}.
\end{equation}
Once again, according to the properties (\ref{eq:disj}) of disjunction:
\[
\reif{q_1 \vop r} \subseteq \reif{q_1 \vop r} \cup \reif{q_2 \vop r} = 
    \reif{(q_1 \vop r) \lor (q_2 \vop r)} = \reif{(q_1 \lor q_2) \vop r},
\]
which together with (\ref{eq:axiomdist2}) means that $(\lstate', \hstate',
\tstate') \in \reif{(q_1 \lor q_2) \vop r}$. Overall we have shown that there
exist $\hstate'$ and $\tstate'$ such that $\lptrans{\hstate, \tstate}{\hstate',
\tstate'} $ and $(\lstate', \hstate', \tstate') \in \reif{(q_1 \lor q_2) \vop
r}$, which concludes the proof of (\ref{eq:axiomdist1}).
\end{proof}

\medskip

\noindent We now prove the closure properties from Lemma~\ref{lem:safe}.

\mypar{Proof of {\sc Frame}}
  Let us define an auxiliary function:
\[
\fbin{X}{r} \triangleq \dc{ (p*r, \lcom, q*r) \mid (p, \lcom, q) \in X}.
\]
Then our goal is to prove that $\fbin{\sfsafe}{r} \subseteq \sfsafe$. Since
$\sfsafe = \gfp{F_\tid}$, we can do a proof by coinduction: to conclude that
$\fbin{\sfsafe}{r} \subseteq \sfsafe$ holds, we demonstrate $\fbin{\sfsafe}{r}
\subseteq F_\tid(\fbin{\sfsafe}{r})$.

Consider any $(p', \lcom, q') \in \fbin{\sfsafe}{r}$. There necessarily are $p,
q, r$ such that $p' = p * r$, $q' = q * r$ and $(p, \lcom, q) \in
\sfsafe$. Let us assume that $\lcom = \cskip$.  Then $(p, \lcom, q) \in
\sfsafe$ implies that $p \rimpl q$. By Proposition~\ref{prop:axiomlocal}, $p *
r \rimpl q * r$, which implies $(p*r, \cskip, q*r) \in \sfsafe$ to hold.

Now let $\lcom \not= \cskip$. Since $(p, \lcom, q) \in \sfsafe$, by definition
of the safety relation the following holds of every $\lpcom$, $\lcom'$ and any
transition $\trans{\lcom}{\lpcom}{\lcom'}$:
  \[
  \exists p' \sothat \sat{\tid}{\lpcom}{p}{p'} \land (p', \lcom', q) \in \sfsafe.
  \]
By Proposition~\ref{prop:axiomlocal}, $\sat{\tid}{\lpcom}{p}{p'}$ implies
$\sat{\tid}{\lpcom}{p*r}{p'*r}$. Also, when $(p', \lcom', q) \in \sfsafe$, it
is the case that $(p'*r, \lcom', q * r) \in \fbin{\sfsafe}{r}$. Thus, we have
shown for every transition $\trans{\lcom}{\lpcom}{\lcom'}$ that there
exists $p'' = p' * r$ such that $\sat{\tid}{\lpcom}{p*r}{p''}$ and $(p'',
\lcom', q*r) \in \fbin{\sfsafe}{r}$:
  \[
  \forall \lpcom, \lcom \sothat \trans{\lcom}{\lpcom}{\lcom'} \implies \exists p'' \sothat 
             \sat{\tid}{\lpcom}{p*r}{p''} \land
             (p'', \lcom', q*r) \in \fbin{\sfsafe}{r},
  \]
  which is sufficient to conclude that $(p * r, \lcom, q * r) \in F_\tid(\fbin{\sfsafe}{r})$.
  \qed

\mypar{Proof of {\sc Choice}}
  Let us define an auxiliary function:
  \[
  \fbin{X}{Y} \triangleq \dc{ (p, \lcom_1 \ldet \lcom_2, q) \mid (p, \lcom_1, q) \in X 
                        \land (p, \lcom_2, q) \in Y}.
  \]
  Then our goal is to prove that $\fbin{\sfsafe}{\sfsafe} \subseteq
  \sfsafe$. For convenience, we prove an equivalent inequality $\fbin{\sfsafe}{\sfsafe} \cup
  \sfsafe \subseteq \sfsafe$ instead.

  Since $\sfsafe = \gfp{F_\tid}$, we can do a proof by coinduction: to conclude that
  $\fbin{\sfsafe}{\sfsafe} \cup \sfsafe \subseteq \sfsafe$ holds, we demonstrate
  $\fbin{\sfsafe}{\sfsafe} \cup \sfsafe \subseteq F_\tid(\fbin{\sfsafe}{\sfsafe} \cup
  \sfsafe)$. 

  Let us consider $(p, \lcom, q) \in \sfsafe$. Since $\sfsafe = \gfp{F_\tid}$, we
  know that $\sfsafe = F_\tid(\sfsafe)$ holds. Then by monotonicity of $F_\tid$, $(p,
  \lcom, q) \in F_\tid(\sfsafe) \subseteq F_\tid(\fbin{\sfsafe}{\sfsafe} \cup \sfsafe)$.

  Now let us consider $(p, \lcom, q) \in \fbin{\sfsafe}{\sfsafe}$. There necessarily are $\lcom_1$
  and $\lcom_2$ such that $\lcom = \lcom_1 \ldet \lcom_2$, $(p, \lcom_1, q) \in \sfsafe$, and $(p,
  \lcom_2, q) \in \sfsafe$.  For $(p, \lcom_1 \ldet \lcom_2, q)$ to belong to
  $F_\tid(\fbin{\sfsafe}{\sfsafe} \cup \sfsafe)$, the following has to be proven for every transition
  $\trans{\lcom_1 \ldet \lcom_2}{\lpcom}{\lcom'}$:
  \begin{equation}\label{eq:safechoice1} 
    \exists p' \sothat 
             \sat{\tid}{\lpcom}{p}{p'} \land
             (p', \lcom', q) \in \fbin{\sfsafe}{\sfsafe} \cup \sfsafe.
  \end{equation}
  According to the rules of the operational semantics (Figure~\ref{fig:opsem}), whenever
  $\trans{\lcom_1 \ldet \lcom_2}{\lpcom}{\lcom'}$, necessarily $\lpcom = \id$ and either
  $\lcom' = \lcom_1$ or $\lcom' = \lcom_2$. Let us assume that $\lcom' = \lcom_1$ (the other case
  is analogous). The action judgement $\sat{\tid}{\id}{p}{p}$ holds trivially. Knowing that $(p,
  \lcom_1, q) \in \sfsafe$, it is easy to see that (\ref{eq:safechoice1}) can be satisfied by
  letting $p' = p$. Consequently, $(p, \lcom_1 \ldet \lcom_2, q) \in F_\tid(\fbin{\sfsafe}{\sfsafe} \cup
  \sfsafe)$, which concludes the proof. \qed

\mypar{Proof of {\sc Disj}}
  Let
 $$
 \fun{X} \triangleq \dc{ (p_1 \lor p_2, \lcom, q_1 \lor q_2) \mid (p_1, \lcom,
   q_1) \in X \land (p_2, \lcom, q_2) \in X}.
$$ 
Then our goal is to prove that $\fun{\sfsafe} \subseteq \sfsafe$. Since $\sfsafe
= \gfp{F_\tid}$, we can do a proof by coinduction: to conclude that $\fun{\sfsafe}
\subseteq \sfsafe$ holds, we demonstrate $\fun{\sfsafe} \subseteq
F_\tid(\fun{\sfsafe})$.

  Let us consider $(p, \lcom, q) \in \fun{\sfsafe}$. Then there necessarily are $p_1, q_1, p_2$ and
  $q_2$ such that $p = p_1 \lor p_2$, $q = q_1 \lor q_2$, and $(p_1, \lcom, q_1), (p_2, \lcom, q_2)
  \in \sfsafe$. From the latter we get that for any $\lpcom, \lcom'$ and a transition
  $\trans{\lcom}{\lpcom}{\lcom'}$ the following holds:
  \[\begin{array}{l}
    \exists p'_1 \sothat \sat{\tid}{\lpcom}{p_1}{p'_1} \land (p'_1, \lcom', q_1) \in \sfsafe; \\
    \exists p'_2 \sothat \sat{\tid}{\lpcom}{p_2}{p'_2} \land (p'_2, \lcom', q_2) \in \sfsafe.
  \end{array}\]
  Then it is the case that $(p'_1 \lor p'_2, \lcom', q_1 \lor q_2) \in \fun{\sfsafe}$. Moreover,
  $\sat{\tid}{\lpcom}{p_1 \lor p_2}{p'_1 \lor p'_2}$ holds by Proposition~\ref{prop:axiomdist}.
  Thus, we have shown for every transition $\trans{\lcom}{\lpcom}{\lcom'}$ that there exists
  $p' = p'_1 \lor p'_2$ such that $\sat{\tid}{\lpcom}{p_1 \lor p_2}{p'}$ and $(p', \lcom', q_1 \lor
  q_2) \in \fun{\sfsafe}$:
  \[
    \forall \lpcom, \lcom \sothat \trans{\lcom}{\lpcom}{\lcom'} \implies 
      \exists p' \sothat 
             \sat{\tid}{\lpcom}{p_1 \lor p_2}{p'} \land
             (p', \lcom', q_1 \lor q_2) \in \fun{\sfsafe}.
  \]
  which is sufficient to conclude that $(p_1 \lor p_2, \lcom, q_1 \lor q_2) \in F_\tid(\fun{\sfsafe})$.
  \qed
  
\mypar{Proof of {\sc Iter}} 
  To do a proof by coinduction, we strengthen {\sc Iter} property as follows:
  \begin{multline}\label{eq:safeiter1}
  \forall p, \lcom \sothat ((p, \lcom, p) \in \sfsafe \implies 
    (p, \iter{\lcom}, p) \in \sfsafe) \land{} \\
      (\forall p_1, \lcom_1 \sothat (p_1, \lcom_1, p) \in \sfsafe \land
        (p, \lcom, p) \in \sfsafe \implies (p_1, \lcom_1 \lseq \iter{\lcom}, p) \in \sfsafe).
  \end{multline}

\ag{Can you simplify this proof by using the result for Seq as a black box?
  E.g., the last conjunct above follows from Seq.}


  Let us define auxilliary functions:
  \begin{gather*}
  \fun{X} \triangleq \dc{ (p, \iter{\lcom}, p) \mid (p, \lcom, p) \in X}\\
  \funx{X} \triangleq \dc{ (p_1, \lcom_1 \lseq \iter{\lcom}_2, p_2) \mid 
      (p_1, \lcom_1, p_2), (p_2, \lcom_2, p_2) \in X}.
  \end{gather*}
  Using them, we rewrite (\ref{eq:safeiter1}) as $\fun{\sfsafe} \cup \funx{\sfsafe} \subseteq
  \sfsafe$. Let $\xi = \{ (p, \cskip, p) \}$. It is easy to see that $\fun{\sfsafe} \cup
     \funx{\sfsafe} \cup \xi \subseteq \sfsafe$ is also an equivalent reformulation of
     (\ref{eq:safeiter1}), since $\xi \subseteq \sfsafe$ always holds.

  Since $\sfsafe = \gfp{F_\tid}$, we can do a proof by coinduction: to conclude that $\fun{\sfsafe} \cup
  \funx{\sfsafe} \cup \xi \subseteq \sfsafe$, we demonstrate $\fun{\sfsafe} \cup \funx{\sfsafe}
  \cup \xi \subseteq F_\tid(\fun{\sfsafe} \cup \funx{\sfsafe} \cup \xi)$. 

  Consider any $(p, \lcom, q) \in \xi$. Necessarily, $\lcom = \cskip$ and $q =
  p$. Note that $p \rimpl p$ always holds, which by definition of $F_\tid$ is
  sufficient for $(p, \cskip, p) \in F_\tid(\fun{\sfsafe} \cup \funx{\sfsafe}
  \cup
  \xi)$. Thus, $(p, \lcom, q) \in F_\tid(\fun{\sfsafe} \cup \funx{\sfsafe} \cup
  \xi)$.

  Consider any $(p, \lcom', q) \in \fun{\sfsafe}$. Necessarily, $p = q$ and
  there exists a sequential command $\lcom$ such that $\lcom' = \iter{\lcom}$
  and $(p, \lcom, p) \in \sfsafe$. We need to show that $(p, \iter{\lcom}, p)
  \in F_\tid(\fun{\sfsafe} \cup \funx{\sfsafe} \cup \xi)$. For the latter to 
  hold, by definition of $F_\tid$ it is sufficient that for every $\lpcom$,
  $\lcom''$ and a transition $\trans{\lcom}{\lpcom}{\lcom''}$ the
  following be true:
  \begin{equation}\label{eq:safeiter2}
      \exists p'' \sothat 
               \sat{\tid}{\lpcom}{p}{p''} \land
               (p'', \lcom'', p) \in \fun{\sfsafe} \cup \funx{\sfsafe} \cup \xi.  
  \end{equation}
  According to the operational semantics in Figure~\ref{fig:opsem}, when there
  is a transition $\trans{\lcom}{\lpcom}{\lcom''}$, necessarily $\lpcom =
  \id$ and either $\lcom'' = \cskip$ or $\lcom'' = \lcom \lseq \iter{\lcom}$.
  Let us assume that $\lcom'' = \cskip$. Since both $(p, \cskip, p) \in \xi$
  and $\sat{\tid}{\lpcom}{p}{p}$ always hold, it is easy to see that letting
  $p'' = p$ satisfies (\ref{eq:safeiter2}). Now let us turn to the case when
  $\lcom'' = \lcom \lseq \iter{\lcom}$. Note that $(p, \lcom \lseq
  \iter{\lcom}, p) \in \funx{\sfsafe}$ holds by definition of $\psi$. Thus, by
  letting $p'' = p$ we satisfy (\ref{eq:safeiter2}).

  Consider $(p_1, \lcom_0, p_2) \in \funx{\sfsafe}$. Necessarily, there exist $\lcom_1$ and
  $\lcom_2$ such that:
  \begin{equation}\label{eq:safeiter3pr}
  \lcom_0 = \lcom_1 \lseq \iter{\lcom}_2 \land (p_1, \lcom_1, p_2) \in \sfsafe \land
  (p_2, \lcom_2, p_2) \in \sfsafe.
  \end{equation}
  We need to show that $(p_1, \lcom_1 \lseq \iter{\lcom}_2, p_2) \in
  F_\tid(\fun{\sfsafe} \cup \funx{\sfsafe} \cup \xi)$. For the latter to hold, we
  need to prove the following for every $\lpcom$, $\lcom'$ and a transition
  $\trans{\lcom_1 \lseq \iter{\lcom}_2}{\lpcom}{\lcom'}$:
  \begin{equation}\label{eq:safeiter3}
      \exists p' \sothat 
               \sat{\tid}{\lpcom}{p_1}{p'} \land
               (p', \lcom', p_2) \in \fun{\sfsafe} \cup \funx{\sfsafe} \cup \xi.  
  \end{equation}
  According to the operational semantics in Figure~\ref{fig:opsem}, when there is a transition
  $\trans{\lcom_1 \lseq \iter{\lcom}_2}{\lpcom}{\lcom'}$, either of the following is true:
  \begin{itemize}
  \item there are $\lcom'_1$ and a transition $\trans{\lcom_1}{\lcom}{\lcom'_1}$ such that $\lcom' = \lcom'_1 \lseq \iter{\lcom}_2$;
  \item $\lcom_1 = \cskip$, $\lcom' = \lcom_2$ and $\lpcom = \id$.
  \end{itemize}
  Let us assume that the former is the case. From (\ref{eq:safeiter3pr}) we know that $(p_1, \lcom_1, p_2) \in \sfsafe$, so by definition of the safety relation we get that:
  \[
  \exists p'_1 \sothat 
               \sat{\tid}{\lpcom}{p_1}{p'_1} \land
               (p'_1, \lcom'_1, p_2) \in \sfsafe. 
  \]
  Consequently, $(p'_1, \lcom'_1 \lseq \iter{\lcom}_2, p_2) \in \funx{\sfsafe}$. Thus, by letting $p' = p'_1$ we can satisfy (\ref{eq:safeiter3}).

  Now let $\lcom_1 = \cskip$ and $\lpcom = \id$. From (\ref{eq:safeiter3pr}) we know that $(p_1,
  \cskip, p_2) \in \sfsafe$, meaning that necessarily $p_1 \rimpl p_2$. It is easy to see that $p_1
  \rimpl p_2$ holds if and only if so does $\sat{\tid}{\id}{p_1}{p_2}$. Knowing that $(p_2, \lcom_2,
  p_2) \in \sfsafe$, we can satisfy (\ref{eq:safeiter3}) by letting $p' = p_2$.
\qed

\ag{I think you can remove q' below. Just combine any safe triples agreeing on
  the middle predicate. If this doesn't cause too much disruption, perhaps you
  could implement this.}

\mypar{Proof of {\sc Seq}} Let
$$
\fbin{X}{q'} \triangleq \dc{ (p, \lcom_1 \lseq \lcom_2, q) \mid(p, \lcom_1, q') \in X \land
  (q', \lcom_2, q) \in X}.
$$
Then our goal is to prove that $\fbin{\sfsafe}{q'} \subseteq \sfsafe$. For
convenience, we prove an equivalent inequality $\fbin{\sfsafe}{q'} \cup \sfsafe
\subseteq \sfsafe$ instead.

  Since $\sfsafe = \gfp{F_\tid}$, we can do a proof by coinduction: to conclude that $\fbin{\sfsafe}{q'}
  \cup \sfsafe \subseteq \sfsafe$ holds, we demonstrate $\fbin{\sfsafe}{q'} \cup \sfsafe \subseteq
  F_\tid(\fbin{\sfsafe}{q'} \cup \sfsafe)$.

  Let us consider any $(p, \lcom, q) \in \sfsafe$. Since $\sfsafe = \gfp{F_\tid}$, we know that $\sfsafe
  = F_\tid(\sfsafe)$ holds. Then by monotonicity of $F_\tid$, $p, \lcom, q) \in F_\tid(\sfsafe) \subseteq
  F_\tid(\fbin{\sfsafe}{\sfsafe} \cup \sfsafe)$.

  Now let us consider any $(p, \lcom, q) \in \fbin{\sfsafe}{q'}$. There necessarily are $\lcom_1$
  and $\lcom_2$ such that:
  \begin{equation} \label{eq:safeseq1}
  \lcom = \lcom_1 \lseq \lcom_2 \land (p, \lcom_1, q') \in \sfsafe \land
    (q', \lcom_2, q) \in \sfsafe.
  \end{equation}
  For $(p, \lcom_1 \lseq \lcom_2, q)$ to belong to
  $F_\tid(\fbin{\sfsafe}{\sfsafe} \cup \sfsafe)$, the following has to be the case for every transition
  $\trans{\lcom_1 \lseq \lcom_2}{\lpcom}{\lcom'}$:
  \begin{equation}\label{eq:safeseqc1} 
    \exists p' \sothat \sat{\tid}{\lpcom}{p}{p'} \land
             (p', \lcom', q) \in \fbin{\sfsafe}{q'} \cup \sfsafe.
  \end{equation}
  According to the rules of the operational semantics (Figure~\ref{fig:opsem}), when there is a
  transition $\trans{\lcom_1 \lseq \lcom_2}{\lpcom}{\lcom'}$, either of the following is true:
  \begin{itemize}
  \item there exists $\lcom'_1$ such that $\lcom' = \lcom'_1 \lseq \lcom_2$ and
    $\trans{\lcom_1}{\lpcom}{\lcom'_1}$; or  
  \item $\lcom_1 = \cskip$, $\lpcom = \id$ and $\lcom' = \lcom_2$.
  \end{itemize}
  Let us assume that the former is the case. From (\ref{eq:safeseq1}) we know that $(p, \lcom_1,
  q') \in \sfsafe$, which means that the following holds of
  $\trans{\lcom_1}{\lpcom}{\lcom'_1}$:
  \[
  \exists p'' \sothat \sat{\tid}{\lpcom}{p}{p''} \land (p'', \lpcom'_1, q') \in \sfsafe.
  \]
  When $(p'', \lpcom'_1, q') \in \sfsafe$ and $(q', \lpcom_2, q) \in \sfsafe$, it is the case that
  $(p'', \lpcom'_1 \lseq \lpcom_2, q) \in \fbin{\sfsafe}{q'}$. Thus, by letting $p' = p''$ we
  satisfy (\ref{eq:safeseqc1}).

  We now consider the case when $\lcom_1 = \cskip$, $\lpcom = \id$ and $\lcom' = \lcom_2$. From
  (\ref{eq:safeseq1}) we know that $(p, \cskip, q') \in \sfsafe$, meaning that $p \rimpl q'$, or
  equivalently $\sat{\tid}{\id}{p}{q'}$. We also know from (\ref{eq:safeseq1}) that $(q', \lcom_2,
  q) \in \sfsafe$. Thus, (\ref{eq:safeseqc1}) can be satisfied by letting $p' = q'$.
  \qed

\mypar{Proof of {\sc Conseq}} Let us first show that $\safe{\tid}{p'}{\lcom}{q}$ holds, when so do
  $p' \rimpl p$ and $\safe{\tid}{p}{\lcom}{q}$. When $\lcom = \cskip$, $\safe{\tid}{p}{\lcom}{q}$
  gives us that $p \rimpl q$. It is easy to see that $p' \rimpl p$ and $p \rimpl q$ together imply
  $p' \rimpl q$, which is sufficient to conclude that $\safe{\tid}{p'}{\lcom}{q}$ holds. Let us
  assume that $\lcom \not= \cskip$. From $\safe{\tid}{p}{\lcom}{q}$ we get that the following holds
  of every transition $\trans{\lcom}{\lpcom}{\lcom'}$:
  \[
    \exists p'' \sothat \sat{\tid}{\lpcom}{p}{p''} \land \safe{\tid}{p''}{\lcom'}{q}
  \]
  However, by applying Proposition~\ref{prop:axiomconseq} about Consequence property of axiom
  judgements to $p' \rimpl p$ and $\sat{\tid}{\lpcom}{p}{p''}$ we get that
  $\sat{\tid}{\lpcom}{p'}{p''}$. Together with the formula above, it allows us to conclude that
  $\safe{\tid}{p'}{\lcom}{q}$ holds.

\ag{Why are you fixing q? Can't phi just weaken the postcondition in any safe
  triple in any way?}

  Now let us prove that $\safe{\tid}{p}{\lcom}{q'}$ holds, when so do $q \rimpl q'$ and
  $\safe{\tid}{p}{\lcom}{q}$. We define an auxilliary function:
  \[
    \fbin{X}{q} \triangleq \dc{ (p, \lcom, q') \mid (p, \lcom, q) \in X \land q \rimpl q'}.
  \]
  Our goal is to prove that $\fbin{\sfsafe}{q} \subseteq \sfsafe$. Since $\sfsafe = \gfp{F_\tid}$, we
  can do a proof by coinduction: to conclude that $\fbin{\sfsafe}{q} \subseteq \sfsafe$ holds, we
  demonstrate $\fbin{\sfsafe}{q} \subseteq F_\tid(\fbin{\sfsafe}{q})$.

  Let us consider any $(p, \lcom, q') \in \fbin{\sfsafe}{q}$. Necessarily, $(p, \lcom,
  q) \in \sfsafe$ and $q \rimpl q'$. When $\lcom = \cskip$, we need to show that $p \rimpl q'$.
  Since $(p, \cskip, q) \in \sfsafe$, it is the case that $p \rimpl q$. It is easy to see that
  $p \rimpl q$ and $q \rimpl q'$ together imply $p \rimpl q'$, which is sufficient to conclude
  that $(p, \lcom, q') \in F_\tid(\fbin{\sfsafe}{q})$.

  Now consider the case when $\lcom \not= \cskip$. Since $(p, \lpcom, q) \in \sfsafe$, by definition of the safety relation the following holds of
  every $\lpcom$, $\lcom'$ and a transition $\trans{\lcom}{\lpcom}{\lcom'}$:
  \[
  \exists p'' \sothat \sat{\tid}{\lpcom}{p}{p''} \land (p'', \lcom', q) \in \sfsafe
  \]
  Knowing that $q \rimpl q'$ and $(p'', \lcom', q) \in \sfsafe$, it is easy to see that $(p'',
  \lcom', q') \in \fbin{\sfsafe}{q}$. Thus, we have shown that:
  \[
  \forall \lpcom, \lcom' \sothat \trans{\lcom}{\lpcom}{\lcom'} \implies
    \exists p'' \sothat 
             \sat{\tid}{\lpcom}{p}{p''} \land
             (p'', \lcom', q') \in \fbin{\sfsafe}{q},
  \]
  which is sufficient for $(p, \lcom, q') \in F_\tid(\fbin{\sfsafe}{q})$ to hold.
\qed

\section{Proof of Lemma~\ref{lem:logic2safety}}

\mypar{Lemma}
    $\forall \tid, \lP, \lcom, \lQ \sothat \infer{\tid}{\lP}{\lcom}{\lQ} \implies 
        \forall \lint \sothat \safe{\tid}{\evalf{\lP}{\lint}}{\lcom}{\evalf{\lQ}{\lint}}.$

We prove Lemma~\ref{lem:logic2safety} by rule induction. For that we choose
arbitrary thread identifier $\tid$ and demonstrate that $\forall \lint \sothat
\safe{\tid}{\evalf{\lP}{\lint}}{\lcom}{\evalf{\lQ}{\lint}}$ is closed under the
proof rules from Figure~\ref{fig:proofrules}. The cases of {\sc Choice}, {\sc
Iter}, {\sc Seq}, {\sc Conseq}, {\sc Frame} and {\sc Disj} rules are
straightforward: they trivially follow from Lemma~\ref{lem:safe} after using
the properties of $\evalf{-}{\lint}$ from Figure~\ref{fig:sem4frm}. The {\sc
Ex} rule uses the fact that $\Vals$, which is the range of $i$, is finite,
which makes possible proving it just like the {\sc Disj} rule.

It remains to consider the {\sc Prim} rule to conclude
Lemma~\ref{lem:logic2safety}. Let us assume that $\forall \lint' \sothat
\sat{\tid}{\lpcom}{\evalf{\lP}{\lint'}}{\evalf{\lQ}{\lint'}}$ holds. We need to
demonstrate that so does $\forall \lint \sothat
\safe{\tid}{\evalf{\lP}{\lint}}{\lpcom}{\evalf{\lQ}{\lint}}$. To conclude that
the latter holds, according Definition~\ref{def:safety} we need to prove the
following for every $\lint$:
\begin{equation}\label{eq:l2s-prim}
\forall \lcom', \lpcom' \sothat
          \trans{\lpcom}{\lpcom'}{\lcom'} \implies \exists p' \sothat
          \sat{\tid}{\lpcom'}{\evalf{\lP}{\lint}}{p'} \land
          \safe{\tid}{p'}{\lcom'}{\evalf{\lQ}{\lint}}
\end{equation}
According to the operational semantics from Figure~\ref{fig:opsem}, the only
transition from a command $\lpcom$ is $\trans{\lpcom}{\lpcom}{\cskip}$.
Thus, in the formula above $\lpcom' = \lpcom$ and $\lcom' = \cskip$. Note that
$\safe{\tid}{\evalf{\lQ}{\lint}}{\cskip}{\evalf{\lQ}{\lint}}$ holds trivially.
Additionally, by our assumption,
$\sat{\tid}{\lpcom}{\evalf{\lP}{\lint'}}{\evalf{\lQ}{\lint'}}$ holds for any
$\lint'$. Consequently, it holds for $\lint' = \lint$. We conclude that by
letting $p' = \evalf{\lQ}{\lint}$ we satisfy (\ref{eq:l2s-prim}).

\qed

\section{Proof of Theorem~\ref{thm:lin}}

We further refer to the assumptions of Theorem~\ref{thm:lin} as a relation $\safelib(\llib, \hlib, \preVA, \postVA)$ defined as follows.

\begin{dfn}\label{def:safelib}
Given a concrete library $\llib$, an abstract library $\hlib$ and $\preVA, \postVA : \ThreadID
\times \HPCom \to \VAssn$, we say that a relation $\safelib(\llib, \hlib, \preVA, \postVA)$ holds
if and only if the following requirements are met:
\begin{enumerate}
\item $\dom(\llib) = \dom(\hlib)$;
\item $\forall \lint, \tid, \hpcom, \lstate, \hstate, \tstate, r \sothat (\lstate, \hstate, \tstate) \in \reif{\evalf{\preVA(\tid, \hpcom)}{\lint} \vop r} \implies \tstate(\tid) = \todo{\hpcom}$;
\item $\forall \lint, \tid, \hpcom, \lstate, \hstate, \tstate, r \sothat (\lstate, \hstate, \tstate) \in \reif{\evalf{\postVA(\tid, \hpcom)}{\lint} \vop r} \implies \tstate(\tid) = \done{\hpcom}$;
\item $\begin{multlined}[t]
\forall \lint, \tid, \hpcom, \hpcom', r, \tstate \sothat 
((\lstate, \hstate, \tstate [\tid : \todo{\hpcom}]) \in \reif{\evalf{\preVA(\tid, \hpcom)}{\lint} \vop r} \iff{}\\
(\lstate, \hstate, \tstate [\tid : \done{\hpcom'}]) \in \reif{\evalf{\postVA(\tid, \hpcom')}{\lint} \vop r}).
\end{multlined}$
\item $\begin{multlined}[t]
\forall m, a, r, \tid \sothat m \in \dom(\llib) \land a, r \in \Vals \land 
\tid \in \ThreadID \implies{} \\ 
\infer{\tid}{ \preVA(\tid, \hlib(m, a, r))}{ \llib(m, a, r) }{ \postVA(\tid, \hlib(m, a, r)) };
\end{multlined}$
\end{enumerate}
\end{dfn}

To strengthen the statement of Theorem~\ref{thm:lin} as necessary for its proof, we define an
auxilliary relation, a {\em thread pool invariant}. With this relation we establish a
correspondence between the information about LP in a thread $\tid$ from a given view $\vV{\tid}$
and sequential commands in a thread $\tid$ of a concrete thread pool $\tp$ and abstract thread pool
$\atp$.

\begin{dfn}\label{def:tpinv}
Given a concrete library $\llib$, an abstract library $\hlib$, predicates $\preVA, \postVA :
\ThreadID \times \HPCom \to \VAssn$, a concrete thread pool $\tp$, an abstract thread pool $\atp$,
a view $\vV{\tid}$ and an interpretation of logical variables $\lint$, we say that a thread pool
invariant $\tpinv_\tid(\lint, \tp, \atp, \vV{\tid}, \tstate)$ holds in a thread $\tid$ if and only
if the following requirements are met:
\begin{itemize}
\item if $\tp(\tid) = \idle$, then 
      $\atp(\tid) = \idle$ and $\vV{\tid} \rimpl \evalf{\postVA(\tid, \_)}{\lint}$, or
\item there exist $\lcom, r, m, a$ such that 
      $\tp(\tid) = (\lcom, r)$ and the following holds:
\begin{multline*}
  \safe{\tid}{\vV{\tid}}{\lcom}{\postVA(\tid, \hlib(m, a, r))} \land
  (
  (\tstate(\tid) = \todo{\hlib(m, a, r)} \land \atp(\tid) = (\hlib(m, a, r), r) ) \lor{} \\
  (\tstate(\tid) = \done{\hlib(m, a, r)} \land \atp(\tid) = (\cskip, r) )
  ).
\end{multline*}
\end{itemize}
\end{dfn}

Finally, analogously to Definition~\ref{def:hsem}, we write down formally a definition of the set
of histories of abstract libraries.
\begin{dfn}\label{def:ahsem}
  We define $\hsemn{n}{\hlib, \atp, \hstate}$ as a set of histories such that $\hsemn{0}{\hlib,
  \atp, \hstate} \triangleq \dc{ \emptytr }$ and:
\[
\begin{array}{rcl}
  \hsemn{n}{\hlib, \atp, \hstate} & \triangleq & 
    \{((\tid, \call{m}{a}) ::\hist) \mid m \in \dom(\hlib) \land
        \atp(\tid) = \idle \land{}  \\ & & \hfill 
        \exists r \sothat \hist \in \hsemn{n-1}{\hlib, \atp[\tid : (\hlib(m, a, r), r)], \hstate}\} \\ & &
        {} \cup 
    \{\hist \mid \exists \tid, \hpcom, \hstate', r \sothat
        \atp(\tid) = (\hpcom, r) \land \hstate' \in \intp{\tid}{\hpcom}{\hstate}
        \land{} \\ & & \hfill \hist \in \hsemn{n-1}{\hlib, \atp[\tid : (\cskip, r)], \hstate'}\}
        \\ & & {} \cup
    \{((\tid, \ret{m}{r}) ::\hist) \mid m \in \dom(\hlib) \land \atp(\tid) = (\cskip, r) \land{} \\
        & & \hfill \hist \in \hsemn{n-1}{\hlib, \atp[\tid : \idle], \hstate}\}
\end{array}
\]
We let $\hsem{\llib, \lstate} = \bigcup_{n \geq 0} \hsemn{n}{\hlib, (\lambda \tid \sothat \idle),
\hstate}$ denote the set of all possible histories of a library $\hlib$ that start from a state
$\hstate$.
\end{dfn}

We are now ready to prove Theorem~\ref{thm:lin}.

\mypar{Proof}
Let us consider any $\llib, \hlib, \preVA, \postVA$ such that $\safelib(\llib, \hlib, \preVA,
\postVA)$ holds. Let us explain how we strengthen the statement of the theorem in this proof. We
prove that $\forall n \sothat \phi(n)$ holds with $\phi(n)$ formulated as follows:
\begin{multline}\label{eq:thmphi}
\phi(n) = \forall \lint, \lstate, \hstate, \tstate, \tp, \atp \sothat 
  (\exists \vV{1}, \dots, \vV{\NThreads} \sothat
    \tpinvall{\lint}{\tp}{\atp}{\vV{\tidk}}{\tstate} \land{} \\
    (\lstate, \hstate, \tstate) \in \reif{\vopall_{\tid \in \ThreadID} \vV{\tid}})
    \implies
     \hsemn{n}{\llib, \tp, \lstate} \subseteq \hsem{\hlib, \atp, \hstate}.
\end{multline}

\noindent
Note that according to the semantics of the assertion language $\Assn$
(Figure~\ref{fig:sem4frm}):
\[
\evalf{\vopall_{\tidk \in \ThreadID} (\exists \hpcom \sothat \postVA(\tidk, \hpcom))}{\lint} =
\vopall_{\tidk \in \ThreadID} \evalf{(\exists \hpcom \sothat \postVA(\tidk, \hpcom))}{\lint}.
\]
With that in mind, it is easy to see that letting $\vV{\tidk} = \evalf{(\exists \hpcom \sothat
\postVA(\tidk, \hpcom))}{\lint}$ for all $\tidk \in \ThreadID$, $\tp = (\lambda \tid \sothat
\idle)$ and $\atp = (\lambda \tid \sothat \idle)$ in (\ref{eq:thmphi}) yields the formula:
\begin{multline*}
(\forall \lint, \lstate, \hstate, \tstate \sothat
    (\lstate, \hstate, \tstate) \in \reif{\evalf{\vopall_{\tidk \in \ThreadID} 
      (\exists \hpcom \sothat \postVA(\tidk, \hpcom))}{\lint}} 
\implies{} \\
\bigcup_{n \geq 0} \hsemn{n}{\llib, \lambda \tid \sothat \idle, \lstate} \subseteq 
\hsem{\hlib, \lambda \tid \sothat \idle, \hstate}),
\end{multline*}
which coincides with the statement of the theorem.

We prove $\forall n \sothat \phi(n)$ by induction on $n$. Let us take any $\lint, \lstate, \hstate,
\tstate, \tp$ and $\atp$, and consider $\vV{1}, \dots, \vV{\NThreads}$ such that the premisses of
$\phi(n)$ hold:
\begin{equation}\label{eq:thmpremiss}
\tpinvall{\lint}{\tp}{\atp}{\vV{\tidk}}{\tstate} \land (\lstate,\hstate, \tstate) \in
\reif{\vopall_{\tidk \in \ThreadID} \vV{\tidk}}
\end{equation}
We need to demonstrate that every history $h$ of the concrete library $\llib$ from the set
$\hsemn{n}{\llib, \tp, \lstate}$ is also a history of the abstract library $\hlib$: $h \in
\hsem{\hlib, \atp, \hstate}$.

By Definition~\ref{def:hsem} of $\hsemn{n}{\llib, \tp, \lstate}$, if $n = 0$, then $h$ is an empty
history that is trivially present in $\hsem{\hlib, \atp, \hstate}$. Let us now consider $n > 0$ and
assume that $\phi(n-1)$ holds. By definition of $\hsemn{n}{\llib, \tp, \lstate}$, $h$ corresponds
to one of the three events in a thread $\tid$: a call of an arbitrary method $m$ with an argument
$a$ in a thread $\tid$, a return from a method $m$ with a return value $r$ or a transition in a
thread $\tid$. We consider each case separately.

\mypar{Case \#1} There is a history $h'$, a thread $\tid$, a method $m \in \dom(\llib)$, its
argument $a$ and a return value $r$ such that $h = (\tid, \call{m}{a}) :: h'$, $\tp(\tid) = \idle$
and $h' \in \hsemn{n-1}{\llib, \tp[\tid : (\llib(m, a, r), r), \lstate]}$. By
Definition~\ref{def:ahsem}, to conclude that $h = (\tid, \call{m}{a}) :: h' \in \hsem{\hlib,
\atp, \hstate}$ it is necessary to show that $\atp(\tid) = \idle$ and $h' \in \hsem{\hlib,
\atp[\tid : (\hlib(m, a, r), r)], \hstate}$, which we further do in the proof of Case \#1.

According to (\ref{eq:thmpremiss}), $\tpinv_\tid(\lint, \tp, \atp, \vV{\tid}, \tstate)$ and
$(\lstate,\hstate, \tstate) \in \reif{\vopall_{\tidk \in \ThreadID} \vV{\tidk}}$ hold. Then
necessarily $\atp(\tid) = \idle$ and $\vV{\tid} \rimpl \evalf{\postVA(\tid, \_)}{\lint}$, which
corresponds to the only case when $\tp(\tid) = \idle$ in the thread pool invariant. By
Definition~\ref{def:rimpl} of $\vV{\tid} \rimpl \evalf{\postVA(\tid, \_)}{\lint}$,
$(\lstate,\hstate, \tstate) \in \reif{\vopall_{\tidk \in \ThreadID} \vV{\tidk}}$ implies $(\lstate,
\hstate, \tstate) \in \reif{\evalf{\postVA(\tid, \_)}{\lint} \vop \vopall_{\tidk \in \ThreadID
\setminus \dc{\tid}} \vV{\tidk}}$. From requirements to predicates $\preVA$ and $\postVA$ in
$\safelib(\llib, \hlib, \preVA, \postVA)$ we obtain that the following holds of $(\lstate, \hstate,
\tstate)$:
\begin{itemize}
\item $\tstate(\tid) = \done{\_}$, and
\item $(\lstate, \hstate, \tstate [\tid : \todo{\hlib(m, a, r)}]) \in 
  \reif{\evalf{\preVA(\tid, \hlib(m, a, r))}{\lint} \vop \vopall_{\tidk \in \ThreadID \setminus 
    \dc{\tid}} \vV{\tidk}}$.
\end{itemize}
Let $\vV{\tid}' = \evalf{\preVA(\tid, \hlib(m, a, r))}{\lint}$ and $\vV{\tidk}' = \vV{\tidk}$ for
all $\tidk \not= \tid$. Obviously, $(\lstate, \hstate, \tstate [\tid : \todo{\hlib(m, a, r)}]) \in
\reif{\vopall_{\tidk} \vV{\tidk}'}$.

Also, by Lemma~\ref{lem:logic2safety}, $\safe{\tid}{\evalf{\preVA(\tid, \hlib(m, a,
r))}{\lint}}{\llib(m, a, r)}{\evalf{\postVA(\tid, \hlib(m, a, r))}{\lint}}$ holds. This allows us
to conclude that in a thread $\tid$ a thread pool invariant $\tpinv_\tid(\lint, \tp[\tid :
(\llib(m, a, r), r)], \atp[\tid : (\hlib(m, a, r), r)], \vV{\tid}', \tstate[\tid : \hlib(m, a,
r)])$ holds. Moreover, according to (\ref{eq:thmpremiss}), thread pool invariants hold in all other
threads as well.

We have shown that there exist $\vV{1}', \dots, \vV{\NThreads}'$ such that:
\begin{multline*}
(\forall \tidk \sothat \tpinv_\tidk(\lint, \tp[\tid : (\llib(m, a, r), r)], 
  \atp[\tid : (\hlib(m, a, r), r)], \vV{\tidk}', \tstate[\tid : \hlib(m, a, r)])) \land{} \\
    (\lstate, \hstate, \tstate[\tid : \todo{\hlib(m, a, r)}]) \in 
      \reif{\vopall_{\tidk \in \ThreadID} \vV{\tidk}'},
\end{multline*}
which by the induction hypothesis $\phi(n-1)$ implies that $h' \in \hsemn{n-1}{\llib,
\tp[\tid : (\llib(m, a, r), r)], \lstate} \subseteq \hsem{\hlib, \atp[\tid : (\hlib(m, a, r), r)],
\hstate}$. We have also show that $\atp(\tid) = \idle$. By Definition~\ref{def:ahsem},
$h = (\tid, \call{m}{a}) :: h' \in \hsem{\hlib, \atp, \hstate}$, which concludes the proof of
Case~\#1.

\mypar{Case \#2} There is a history $h'$, a thread $\tid$, a method $m \in \dom(\llib)$, its
argument $a$ and a return value $r$ such that $h = (\tid, \ret{m}{r}) :: h'$, $\tp(\tid) = (\cskip,
r)$ and $h' \in \hsemn{n-1}{\llib, \tp[\tid : \idle], \lstate}$. By Definition~\ref{def:ahsem}, to
conclude that $h = (\tid, \ret{m}{r}) :: h' \in \hsem{\hlib, \atp, \hstate}$ it is necessary to
show that $\atp(\tid) = (\cskip, r)$ and $h' \in \hsem{\hlib, \atp[\tid : \idle], \hstate}$, which
we further do in this proof of Case \#2.

According to (\ref{eq:thmpremiss}), a thread invariant $\tpinv_\tid(\lint, \tp, \atp, \vV{\tid},
\tstate)$ holds. Then the following is true:
\begin{multline}\label{eq:thm2}
  \safe{\tid}{\vV{\tid}}{\cskip}{\evalf{\postVA(\tid, \hlib(m, a, r))}{\lint}} \land
  (
  (\tstate(\tid) = \todo{\hlib(m, a, r)} \land \atp(\tid) = (\hlib(m, a, r), r) ) \lor{} \\
  (\tstate(\tid) = \done{\hlib(m, a, r)} \land \atp(\tid) = (\cskip, r) )
  ).
\end{multline}

By Definition~\ref{def:safety} of $\safe{\tid}{\vV{\tid}}{\cskip}{\evalf{\postVA(\tid, \hlib(m, a,
r))}{\lint}}$, $\vV{\tid} \rimpl \evalf{\postVA(\tid, \hlib(m, a, r))}{\lint}$ holds. Consequently,
by Definition~\ref{def:rimpl}:
\[
(\lstate, \hstate, \tstate) \in 
\reif{\vV{\tid} \vop \vopall_{\tidk \in \ThreadID \setminus \dc{\tid}} \vV{\tidk}} \subseteq 
\reif{\evalf{\postVA(\tid, \hlib(m, a, r))}{\lint} \vop \vopall_{\tidk \in \ThreadID 
    \setminus \dc{\tid}} \vV{\tidk}}.
\]
From the third requirement to $\postVA$ in Definition~\ref{def:safelib}:
\[
(\lstate, \hstate, \tstate) \in \reif{\evalf{\postVA(\tid, \hlib(m, a, r))}{\lint} \vop
  \vopall_\tidk \vV{\tidk}} \implies \tstate(\tid) = \done{\hlib(m, a, r)}.
\]
Consequently, from (\ref{eq:thm2}) we get that $\tstate(\tid) = \done{\hlib(m, a, r)}$ and
$\atp(\tid) = (\cskip, r)$.

Let $\vV{\tid}' = \evalf{\postVA(\tid, \hlib(m, a, r))}{\lint}$ and $\vV{\tidk}' = \vV{\tidk}$ for
$\tidk \not= \tid$. It is easy to see that $\tpinv_\tid(\lint, \tp[\tid : \idle], \atp[\tid :
\idle], \vV{\tid}', \tstate)$ holds trivially by Definition~\ref{def:tpinv}. Moreover, according to
(\ref{eq:thmpremiss}), thread pool invariants hold in other threads as well.

We have shown that there exist $\vV{1}', \dots, \vV{\NThreads}'$ such that:
\[
(\forall \tidk \sothat \tpinv_\tidk(\lint, \tp[\tid : \idle], \atp[\tid : \idle], \vV{\tidk}', \tstate)) \land
  (\lstate, \hstate, \tstate) \in \reif{\vopall_{\tidk \in \ThreadID} \vV{\tidk}'},
\]
which by the induction hypothesis $\phi(n-1)$ implies that $h' \in \hsemn{n-1}{\llib, \tp[\tid :
\idle], \lstate} \subseteq \hsem{\hlib, \atp[\tid : \idle], \hstate}$. We have also show that 
$\atp(\tid) = (\cskip, r)$. By Definition~\ref{def:ahsem}, $h = (\tid, \ret{m}{r}) :: h' \in
\hsem{\hlib, \atp, \hstate}$, which concludes the proof of Case~\#2.

\mypar{Case \#3} There is a thread $\tid$, sequential commands $\lcom$ and
$\lcom'$, a primitive command $\lpcom$, concrete states $\lstate$ and
$\lstate'$ and a return value $r$ such that $\tp(\tid) = (\lcom, r)$,
$\sttrans{\lcom}{\lstate}{\tid}{\lpcom}{\lcom'}{\lstate'}$ and $h \in
\hsemn{n-1}{\llib, \tp[\tid : (\lcom', r)], \lstate'}$.

According to (\ref{eq:thmpremiss}), $\tpinv_\tid(\lint, \tp, \atp, \vV{\tid},
\tstate)$ holds. Consequently, there exist a method $m$ with its argument $a$
such that $\lcom = \hlib(m, a, r)$ and:
\begin{multline}\label{eq:thm3}
  \safe{\tid}{\vV{\tid}}{\lcom}{\evalf{\postVA(\tid, \hlib(m, a, r))}{\lint}}
  \land{}\\
  ( (\tstate(\tid) = \todo{\hlib(m, a, r)}
      \land \atp(\tid) = (\hlib(m, a, r), r) ) \lor{} \\
    (\tstate(\tid) = \done{\hlib(m, a, r)}
      \land \atp(\tid) = (\cskip, r) ) ).
\end{multline}
It is easy to see that whenever there is a transition
$\sttrans{\lcom}{\lstate}{\tid}{\lpcom}{\lcom'}{\lstate'}$, there also is a
stateless transition $\trans{\lcom}{\lpcom}{\lcom'}$. By
Definition~\ref{def:safety} of
$\safe{\tid}{\vV{\tid}}{\lcom}{\evalf{\postVA(\tid, \hlib(m, a, r))}{\lint}}$,
if $\trans{\lcom}{\lpcom}{\lcom'}$, then there exists a view $\vV{\tid}'$
such that $\sat{\tid}{\lpcom}{\vV{\tid}}{\vV{\tid}'}$ and
$\safe{\tid}{\vV{\tid}'}{\lcom'}{\evalf{\postVA(\tid, \hlib(m, a,
r))}{\lint}}$.

Let $\vV{\tidk}' = \vV{\tidk}$ for any $\tidk \not= \tid$. By
Definition~\ref{def:axiom} of the action judgement
$\sat{\tid}{\lpcom}{\vV{\tid}}{\vV{\tid}'}$, for $(\lstate, \hstate, \tstate)
\in \reif{\vV{\tid}\vop\vopall_{\tidk \not= \tid} \vV{\tidk}}$ and any
$\lstate' \in \intp{\tid}{\lpcom}{\lstate}$, there exist $\hstate', \tstate'$
such that:
\begin{equation}\label{eq:thm32}
\lptrans{\hstate, \tstate}{\hstate', \tstate'} 
\land (\lstate', \hstate', \tstate') \in 
    \reif{\vV{\tid}' \vop \vopall_{\tidk \not= \tid} \vV{\tidk}}.
\end{equation}

Let us assume that $\tstate = \tstate'$.
Note that $\safe{\tid}{\vV{\tid}'}{\lcom'}{\evalf{\postVA(\tid, \hlib(m, a,
r))}{\lint}}$ holds, and according to (\ref{eq:thm3}) the following holds too:
\begin{multline*}
(\tstate(\tid) = \todo{\hlib(m, a, r)} \land
  \atp(\tid) = (\hlib(m, a, r), r) ) \lor{} \\
  (\tstate(\tid) = \done{\hlib(m, a, r)} \land
    \atp(\tid) = (\cskip, r) )
\end{multline*}
Thus, it is easy to see that $\tpinv_\tid(\lint, \tp[\tid : (\lcom', r)], \atp,
\vV{\tid}', \tstate)$ holds. Combining this observation with (\ref{eq:thm32}),
we conclude that we have demonstrated existance of $\vV{1}', \dots,
\vV{\NThreads}'$ such that:
\[
(\forall \tidk \sothat
\tpinv_\tidk(\lint, \tp[\tid : (\lcom', r)], \atp, \vV{\tidk}', \tstate)) 
\land 
(\lstate', \hstate', \tstate) \in 
  \reif{\vopall_{\tidk \in \ThreadID} \vV{\tidk}'},
\]
which by the induction hypothesis $\phi(n-1)$ implies that $h \in
\hsemn{n-1}{\llib, \tp[\tid : (\lcom', r)], \lstate'} \subseteq \hsem{\hlib,
\atp, \hstate}$. This concludes the proof of the case when $\tstate =
\tstate'$.

We now return to the case when $\tstate \not= \tstate'$. According to
(\ref{eq:thm32}), $\lptrans{\hstate, \tstate}{\hstate', \tstate'}$ holds,
meaning that linearization points of one or more threads have been passed.
Without loss of generality, we assume the case of exactly one linearization
point, i.e. that $\lp{\hstate, \tstate}{\hstate', \tstate'}$ holds.
Consequently, according to Definition \ref{def:axiom} there exist $\tid'$ and
$\hpcom'$ such that:
\begin{equation}\label{eq:thm33}
\hstate' \in \intp{\tid'}{\hpcom'}{\hstate}
\land
\tstate(\tid') = \todo{\hpcom'}
\land
\tstate' = \tstate[\tid' : \done{\hpcom'}
\end{equation}

Let us consider the thread pool invariant $\tpinv_{\tid'}(\lint, \tp, \atp,
\vV{\tid'}, \tstate)$, which holds according to (\ref{eq:thmpremiss}). We show
that $\tp(\tid') \neq \idle$. From (\ref{eq:thm33}) we know that
$\tstate(\tid') = \todo{\hpcom'}$. Since the third requirement to $\postVA$ in
Definition~\ref{def:safelib} requires that $\tstate(\tid) = \done{\hlib(m, a,
r)}$ hold, by Definition~\ref{def:tpinv} it can only be the case that there
exist $\lcom'', m', a', r'$ such that $\tp(\tid') = (\lcom'', r')$ and the
following is true:
\begin{multline}\label{eq:thm34}
  \safe{\tid'}{\vV{\tid}}{\lcom''}{\evalf{\postVA(\tid', \hlib(m', a', r'))}{\lint}}
  \land{}\\
  ( (\tstate(\tid') = \todo{\hlib(m', a', r')}
      \land \atp(\tid') = (\hlib(m', a', r'), r') ) \lor{} \\
    (\tstate(\tid') = \done{\hlib(m', a', r')}
      \land \atp(\tid') = (\cskip, r') ) ).
\end{multline}
From formula (\ref{eq:thm33}) we know that $\hpcom' = \hlib(m', a', r')$.
Consequently, $\tstate'(\tid') = \done{\hlib(m', a', r')} \land \atp[\tid' :
(\cskip, r')](\tid') = (\cskip, r')$ holds, which allows us to conclude the
thread pool invariant $\tpinv_{\tid'}(\lint, \tp[\tid : (\lcom', r)],
\atp[\tid' : (\cskip, r')], \vV{\tid'}', \tstate')$ in case of $\tid' \neq
\tid$.

We now show that $\tpinv_{\tid}(\lint, \tp[\tid : (\lcom', r)],
\atp[\tid' : (\cskip, r')], \vV{\tid}', \tstate')$ hold, both when $\tid =
\tid'$ and $\tid \neq \tid'$. Let us first assume $\tid = \tid'$ ($r = r'$).
Then from (\ref{eq:thm3}) we get that $\hpcom = \hlib(m, a, r)$ and
$\tstate'(\tid) = \done{\hlib(m, a, r)}$ hold. When $\tid \neq \tid'$, no
abstract transition is made in $\tid$, so $\tstate(\tid) = \tstate'(\tid)$ and
$\atp(\tid) = \atp[\tid' : (\cskip, r')](\tid)$. Consequently, the following is
true in both cases:
\begin{multline}
  ( (\tstate'(\tid) = \todo{\hlib(m, a, r)}
      \land \atp[\tid' : (\cskip, r')](\tid) = (\hlib(m, a, r), r) ) \lor{} \\
    (\tstate'(\tid) = \done{\hlib(m, a, r)}
      \land \atp[\tid' : (\cskip, r')](\tid) = (\cskip, r') ) ).
\end{multline}
Together with $\safe{\tid}{\vV{\tid}'}{\lcom'}{\evalf{\postVA(\tid, \hlib(m, a,
r))}{\lint}}$ , those observations imply $\tpinv_\tid(\lint, \tp[\tid :
(\lcom', r)], \atp[\tid' : (\cskip, r')], \vV{\tid}', \tstate')$.

Combining these observations with $(\lstate', \hstate', \tstate') \in
\reif{\vV{\tid}' \vop \vopall_{\tidk \not= \tid} \vV{\tidk}}$ following from
(\ref{eq:thm32}), we conclude that we have demonstrated existance of $\vV{1}',
\dots, \vV{\NThreads}'$ such that:
\[
(\forall \tidk \sothat \tpinv_\tidk(\lint, \tp[\tid : (\lcom', r)], \atp[\tid' : (\cskip, r')], 
    \vV{\tidk}', \tstate')) \land
  (\lstate', \hstate', \tstate') \in \reif{\vopall_{\tidk \in \ThreadID} \vV{\tidk}'},
\]
which by the induction hypothesis $\phi(n-1)$ implies that $h \in
\hsemn{n-1}{\llib, \tp[\tid : (\lcom', r)], \lstate'} \subseteq \hsem{\hlib,
\atp[\tid : (\cskip, r')], \hstate'}$. Now that we demonstrated that
$\atp(\tid') = (\hlib(m', a', r'), r')$, $\hstate' \in \intp{\tid}{\hlib(m',
a', r')}{\hstate}$ and $h \in \hsem{\hlib, \atp[\tid : (\cskip, r')],
\hstate'}$ all hold, by Definition~\ref{def:ahsem} we can conclude that $h \in
\hsem{\hlib, \atp, \hstate}$.
\qed


\fi

\end{document}
